\newtheorem{thm}{Theorem}
\newtheorem{lem}[thm]{Lemma}
\newtheorem{cor}[thm]{Corollary}
\theoremstyle{definition}
\theoremstyle{remark}
\newcommand{\C}{\mathbb{C}}
\newcommand{\DEF}{{:=}}
\newcommand{\Cset}{\mathbb{C}}
\newcommand{\PT}[1]{\mathbf{#1}}
\DeclareMathOperator{\dd}{\mathrm{d}}
\DeclareMathOperator{\BellB}{B}
\DeclareMathOperator{\gammafcn}{\Gamma}
\DeclareMathOperator{\PolyLog}{L}
\DeclareMathOperator{\Stirlings}{s}
\DeclareMathOperator{\zetafcn}{\zeta}
\DeclareMathOperator{\HyperF}{F}
\newcommand{\Hypergeom}[5]{{\sideset{_#1}{_#2}\HyperF\!\left(\substack{\displaystyle#3\\\displaystyle#4};#5\right)}}
\newcommand{\fallingFactorial}[2]{{\left[#1\right]_{#2}}}
\newcommand{\Pochhsymb}[2]{{\left(#1\right)_{#2}}}
\newcommand{\Eulerian}[2]{{\left\langle #1 \atop #2 \right\rangle}}
\newcommand{\usgnStirlingS}[2]{{\left[ #1 \atop #2 \right]}}
\title[Explicit formulas for the Riesz energy of the $N$th roots of unity]{Explicit formulas for the Riesz energy of \\ the $N$th roots of unity}
\author{J. S. Brauchart} 
\thanks{\noindent The research of this author was supported by the Austrian Science Fund FWF projects F5510 (part of the Special Research Program (SFB) ``Quasi-Monte Carlo Methods: Theory and Applications''). The author also acknowledges the support of the Erwin Schr{\"o}dinger Institute in Vienna, where part of the work was carried out.}
\date{\today}
\begin{document}

\address{J.~S.~Brauchart:
Institut f\"ur Analysis und Computational Number Theory,
Technische Universit\"at Graz,
Steyrergasse 30,
8010 Graz,
Austria
}
\email{j.brauchart@tugraz.at}

\begin{abstract}
The paper Brauchart, Hardin and Saff [Bull. Lond. Math. Soc. 41(4) (2009)] gives the complete  asymptotic expansions of the Riesz $s$-energy of the $N$th roots of unity which form a universally optimal distribution of points on the unit circle in the sense of Cohn and Kumar [J. Amer. Math. Soc. 20 (2007)]. Here, exact formulas (valid for all $N \geq 2$) are obtained for the case when $s$ is an even integer. 
In the case of the singular Riesz $s$-potential $1/r^s$, $r$ the Euclidean distance between two points, a continuous modified energy approximation of the Riesz energy is used. Stirling numbers of the first kind, Eulerian numbers and special values of partial Bell polynomials play a central role. Several identities between these quantities are shown.
\end{abstract}

\keywords{Bell polynomials, Eulerian numbers, Fekete points, Riemann Zeta function, Riesz energy, Roots of Unity, Stirling numbers} 
\subjclass[2000]{Primary 31A15; Secondary 11B73, 33B15, 30B40, 41A60, 78A30}

\maketitle

\centerline{Dedicated to Edward B. Saff on the occasion of his 70\textsuperscript{th} birthday.}

\section{Introduction}

For real $s \neq 0$, the Riesz $s$-energy of an $N$-point configuration $X_N = \{ \PT{x}_1, \dots, \PT{x}_N \}$ in the Euclidean space $\mathbb{R}^p$, $p \geq 1$, is given by 
\begin{equation} \label{eq:energy.functional}
\mathrm{E}_s( X_N ) = \mathop{\sum_{j=1}^N \sum_{k=1}^N}_{j \neq k} \frac{1}{\left| \PT{x}_j - \PT{x}_k \right|^s}.
\end{equation}
A fundamental question concerns the asymptotic expansion of the minimal (if $s > 0$) or maximal (if $s < 0$) Riesz $s$-energy of $N$ points restricted to an infinite compact subset $A \subset \mathbb{R}^p$ as $N \to \infty$; see \cite{BoHaSaBook, BrGr2014arXiv, Gr2014, HaSa2004, SaKu1997, SaTo1997}. 
%
%
In the \emph{potential-theoretic case}, when the energy integral
\begin{equation} \label{eq:energy.integral}
\int \int \frac{1}{| \PT{x} - \PT{y} |^s} \, \dd \mu( \PT{x} ) \dd \mu( \PT{y} )
\end{equation}
is finite for at least one positive measure supported on $A$, the leading-term behaviour follows from arguments of classical potential theory. In the \emph{hyper-singular case}, when this energy integral is $+\infty$ for every positive measure supported on $A$, tools from geometric measure theory yield the leading-term behaviour for a large family of sets $A$. Much less is known about higher-order terms; see \cite{BrHaSa2012}. 
Recently, B{\'e}termin~\cite{Be2014arXiv} found a surprising connection between the problem of minimising a planar “Coulombian renormalized energy” (introduced by Sandier and Serfaty~\cite{SaSe2012}, also see~\cite{Se2014}) and the discrete logarithmic energy problem on the unit sphere in~$\mathbb{R}^3$ that yields the existence of the $N$-term (third term) in the asymptotics of the minimal logarithmic energy on the sphere in~$\mathbb{R}^3$. 
From this perspective it seems to be amazing that complete asymptotic expansions of optimal energy can be obtained at all. In \cite{BrHaSa2009}, we derived such an expansion for point sets on the \emph{unit circle~$\mathbb{S}$} provided with the Euclidean metric of the ambient space.\footnote{The analogous result for geodesic metric on $\mathbb{S}$ is considered in~\cite{BrHaSa2012b}. The author~\cite{Br2014manuscript} has also obtained complete asymptotic expansion of the logarithmic potential energy characterizing zeros of classical orthogonal polynomials (Chebysheff, Gegenbauer, Jacobi, Laguerre, and Hermite).}
%
Since the $N$th roots of unity (or any rotation thereof) are universally optimal point distributions on $\mathbb{S}$ due to a general result of Cohn and Kumar \cite[Theorem 1.2]{CoKu2007} and thus maximize (if $-2<s<0$) and minimize \linebreak (if $s>0$) the energy functional~\eqref{eq:energy.functional} over all $N$-point sets on $\mathbb{S}$, the Riesz $s$-energy ($s \neq 0$ real) of the $N$-th roots of unity can be computed by means of
\begin{equation} \label{LsDef}
\mathcal{L}_s(N) = 2^{-s} N \sum_{k=1}^{N-1} \left( \sin \frac{\pi k}{N} \right)^{-s}, \qquad N\ge 2, \, s\in \Cset, \, s\neq 0.
\end{equation}
We remark that for $s<-2$ optimal configurations on the unit circle with $2K$ points consist of $K$ points at each end of a diameter of $\mathbb{S}$. (This follows from results of G. Bj{\"o}rck~\cite{Bj1956}.)

The asymptotic expansion of $\mathcal{L}_s(N)$ for the general case $s \in \C$ with $s \neq 0, 1, 3, 5, \dots$ has the following form (see \cite[Theorem~1.1]{BrHaSa2009}): for fixed $p = 1, 2, 3, \dots$, 
\begin{equation} \label{eq:circle.asympt}
\begin{split}
\mathcal{L}_s(N) 
&= W_s \, N^2 + \frac{2\zetafcn(s)}{(2\pi)^s} \, N^{1+s} + 
\sum_{n=1}^p \alpha_n(s) \frac{2\zetafcn(s-2n)}{(2\pi)^s} \, N^{1+s-2n} + \mathcal{O}_{s,p}(N^{-1+\Re s-2p}) 
\end{split}
\end{equation}
as $N \to \infty$.
The coefficients $\alpha_n(s)$, $n\geq0$, satisfy the generating function relation
\begin{equation*} \label{sinc.power.0}
\left( \frac{\sin \pi z}{\pi z} \right)^{-s} = 
\sum_{n=0}^\infty \alpha_n(s) \, z^{2n}, \quad |z|<1, \ s\in \mathbb{C}.
\end{equation*}
%
%
%
In the potential-theoretic case $0 < s < 1$, the constant $W_s$ is the Riesz $s$-energy of $\mathbb{S}$; i.e.,  
\begin{equation} \label{eq:W.s}
W_s = W_s(\mathbb{S}) = 2^{-s} \frac{\gammafcn((1-s)/2)}{\sqrt{\pi}\gammafcn(1-s/2)}
\end{equation}
is the value of the energy integral for the uniform measure on $\mathbb{S}$ that uniquely minimizes \eqref{eq:energy.integral} amongst all Borel probability measures supported on $\mathbb{S}$. 
In general, $W_s$ will be identified with the right-hand side of~\eqref{eq:W.s}. The use of analytic continuation in the complex variable $s$ provides a unifying approach to the energy asymptotics and explains why certain phenomena occur. 
This \emph{principle of analytic continuation} is a fundamental feature which is assumed to hold also on higher-dimensional spheres.
Here it explains how the interplay between the infinitely many poles of $W_s$ at $s = 1, 3, 5, \dots$ and the poles of the shifted Riemann zeta functions $\zetafcn(s-2n)$ in \eqref{eq:circle.asympt} gives rise to a blow up of constants whenever one of the terms $N^{1+s-2n}$ of the asymptotic scale approaches the $N^2$term and why a logarithmic term appears in the exceptional cases $s = 1, 3, 5, \dots$. 
%


On the other hand, the Riemann zeta function is zero at negative even integers. This implies that the sum in \eqref{eq:circle.asympt} vanishes for all $N \geq 2$ if $s$ is a negative even integer or can have at most $m$ terms if $s = 2m$ is a positive even integer. Thus, the asymptotic expansion gives explicit formulas for $\mathcal{L}_s(N)$ that are correct up to a remainder term of order $\mathcal{O}_{s,p}(N^{-1-s-2p})$ for sufficiently large~$N$ when $s$ is an even integer. The purpose of this paper is to derive explicit expressions for $\mathcal{L}_{2m}(N)$, $m \neq 0$ an integer, that are valid for all $N \geq 2$.


\emph{Outline.} Sections~\ref{sec:m.positive} and \ref{sec:m.negative} discuss the cases for negative and positive even $s$, respectively. The singular Riesz $s$-potential in Section~\ref{sec:m.negative} is approximated by means of a family of continuous kernels. The explicit computation of the coefficients of the powers of $N$ involves the use of Stirling numbers of the first kine, Eulerian numbers, and evaluation of certain Bell polynomials. The proofs of the results in  Section~\ref{sec:m.negative} are given in Section~\ref{sec:proofs}. Auxiliary results are collected in Appendix~\ref{sec:appendix}.



\section{The case of negative even integers $s$}
\label{sec:m.positive}

In \cite[Remark~1, Eq.~(1.19)]{BrHaSa2009} it was observed that $\mathcal{L}_{s}(N) = W_s \, N^2$ for $s=-2,-4,-6,\dots$, 
where $W_s$ is given in \eqref{eq:W.s}. This relation holds, indeed, for sufficiently large $N$. The precise result is

\begin{thm}
Let $m$ be a positive integer and set $s = -2m$. Then
\begin{equation*}
\mathcal{L}_s(N) = W_s \, N^2 + 2 N^2 \sum_{\substack{k=1 \\ N | k}}^m (-1)^k \binom{2m}{m-k}, \qquad N = 2, 3, 4, \dots.
\end{equation*}
The sum consists of terms for which the integer $N$ divides the integer $k$ and vanishes if $N > m$.
\end{thm}



\begin{proof}
Using the identity (\cite[Eq.~I.1.9]{PrBrMa1986I})
\begin{equation*}
\left( \sin \phi \right)^{2m} = 2^{1-2m} \sum_{k=0}^{m-1} (-1)^{m-k} \binom{2m}{k} \cos[ 2(m-k) \phi] + 2^{-2m} \binom{2m}{m}
\end{equation*}
in \eqref{LsDef}, one has
\begin{equation*}
\mathcal{L}_{-2m}(N) 
= 2 N \sum_{\ell=1}^m (-1)^\ell \binom{2m}{m-\ell} \left[ \sum_{k=0}^{N-1} \cos \frac{2\pi k \ell}{N} \right] + \binom{2m}{m} \, N^2.
\end{equation*}
The trigonometric sums can be summed up (\cite[Equations~4.4.4.6--8]{PrBrMa1986I}),
\begin{equation} \label{eq:cos.identity}
\sum_{k=0}^{N-1} \cos ( 2 \pi k \ell / N ) = 
\begin{cases} 
N & \text{if $N$ divides $\ell$ (i.e., $N | \ell$),} \\
0 & \text{otherwise.}
\end{cases}
\end{equation}
Using the duplication formula for the gamma function, the constant $W_s$ can be written as $\binom{2m}{m}$.
%
The result follows.
\end{proof}

\section{The case of positive even integers $s$}
\label{sec:m.negative}

In \cite[Remark~1, Eq.~(1.20)]{BrHaSa2009} it was observed that 
\begin{equation} \label{exp2xct}
\mathcal{L}_s(N) = \sum_{n=0}^m \alpha_n(s) \frac{2\zetafcn(s-2n)}{(2\pi)^s} \, N^{1+s-2n}, \qquad s=2, 4, 6, \dots.
\end{equation}
Indeed, $W_s$ of \eqref{eq:W.s} vanishes if $s$ is a positive even integer and the shifted Riemann zeta functions $\zetafcn(s-2n)$ are zero if $s-2n$ is an even negative integer. 
Nevertheless, \eqref{exp2xct} is an asymptotic formula which holds up to a  
term of order $\mathcal{O}_{s,p}(N^{-1+s-2p})$, where $p$ is any integer $\geq s/2$. 
In fact, direct computation yields that the formula \eqref{exp2xct} holds for all $N \geq 2$ (see Theorem~\ref{thm:even.positive.s} below). 
%

%
%
The distance between a point $\PT{x}$ on the unit circle ($|\PT{x}| = 1$) and a point $\PT{y}$ inside the unit circle ($r=|\PT{y}| < 1$) is given by 
$\left| \PT{x} - \PT{y} \right|^2 = 1 - 2 r \langle \PT{x}, \PT{y} \rangle + r^2 = \left( 1 - r \right)^2 + 4 r \left[ \sin( \phi / 2 ) \right]^2$, $\cos \phi = \langle \PT{x}, \PT{y} \rangle$.
This motivates the following modification of the the $s$-energy of the $N$-th roots of unity:
\begin{equation} \label{eq:callM}
\mathcal{M}_{s}( N; r ) \DEF N \sum_{k = 1}^{N-1} \left( 1 - 2 r \cos \frac{2 \pi k}{N} + r^2 \right)^{-s/2}, \qquad 0 \leq r < 1.
\end{equation}
Chu~\cite{Chu2003} studied similar sums which satisfy a recurrence relation,
\begin{equation*}
f_m(z) = \sum_{k=0}^{n-1} \frac{1 - z^2}{\left[ 1 - 2 z \cos( 2 \pi k / n ) + z^2 \right]^m}, \qquad f_{m+1}(z) = \frac{f_m(z)}{1-z^2} + \frac{z}{m} \frac{\dd}{\dd z} \left\{ \frac{f_m(z)}{1-z^2} \right\}.
\end{equation*}
However, his main interest was in the case when $z = \sqrt{-1}$.

In the following we make use of the {\em Gauss hypergeometric function}
\begin{equation} \label{eq:Gauss.hypergeometric}
\Hypergeom{2}{1}{a,b}{c}{z} = \sum_{n=0}^{\infty} \frac{\Pochhsymb{a}{n} \Pochhsymb{b}{n}}{\Pochhsymb{c}{n}} \; \frac{z^n}{n!}, \qquad |z| < 1,
\end{equation}
where $\Pochhsymb{a}{n}$ denotes the Pochhammer symbol defined by $\Pochhsymb{a}{0} \DEF 1$ and $\Pochhsymb{a}{n+1} = \Pochhsymb{a}{n} ( a + n )$. One has the relation $\Pochhsymb{a}{n} = \gammafcn(n + a ) / \gammafcn(a)$ provided the right-hand side is well-defined.

\begin{lem} \label{lem:mod.energy}
Let $s > 0$. Then for integers $N \geq 2$,
\begin{equation*}
\mathcal{M}_{s}( N; r ) = N^2 \, G_0(s; r) - N \left( 1 - r \right)^{-s} + 2 N^2 \sum_{\nu=1}^\infty G_{\nu N}(s; r), \qquad 0 \leq r < 1,
\end{equation*}
where the series is uniformly convergent with respect to $r$ on compact subsets of $[0,1)$ and
\begin{equation*}
G_n(s; r) = \left( 1 - r^2 \right)^{1-s} \frac{\Pochhsymb{s/2}{n}}{n!} r^{n} \Hypergeom{2}{1}{1-s/2,n + 1 - s/2}{n+1}{r^2}.
\end{equation*}
\end{lem}

For $s = 2m$ a positive even integer, one can write (cf. \cite[Eq.~15.8.7]{Olver:2010:NHMF})
\begin{equation} \label{eq:G.n.2m.r}
G_n(2m; r) = \left( 1 - r^2 \right)^{1-2m} \frac{2^{2m-2} \gammafcn(m-1/2)}{\sqrt{\pi}\, \gammafcn(m)} r^{n} \Hypergeom{2}{1}{1-m,n + 1 - m}{2-2m}{1-r^2}.
\end{equation}
The hypergeometric function above reduces to a polynomial of degree $m-1$.

For $s=2$ (i.e., $m=1$), one simply has $G_n(2;r) = r^n / (1 - r^2)$. Hence,
\begin{align*}
\mathcal{M}_{2}( N; r ) 
&= \frac{N^2}{1 - r^2} - \frac{N}{\left( 1 - r \right)^{2}} + \frac{2 N^2}{1 - r^2} \sum_{\nu=1}^\infty r^{\nu N} 
= \frac{1 + r^N}{\left( 1 - r^2 \right) \left(1- r^N\right)} \, N^2  - \frac{N}{\left( 1 - r \right)^{2}}.
\end{align*}
With the help of Mathematica, 
\begin{equation*}
\mathcal{L}_2(N) = \lim_{r\to1^-} \mathcal{M}_{2}( N; r ) = \frac{N^3}{12} - \frac{N}{12} \qquad \text{and by \eqref{exp2xct}:} \qquad \mathcal{L}_2(N) = \frac{2}{4 \pi^2} \left( \frac{\pi^2}{6} N^3 - \frac{\pi^2}{6} N \right).
\end{equation*}
It follows that \eqref{exp2xct} is correct for $s = 2$ and all $N \geq 1$. 
In the general result we make use of {\em Stirling numbers of the first kind $\Stirlings(n,k)$} defined by the expansion (\cite[Eq.~26.8.7]{Olver:2010:NHMF})
\begin{equation} \label{eq:Stirling.s}
\sum_{k=0}^n \Stirlings(n,k) x^k = x \left( x - 1 \right) \cdots \left( x - n + 1 \right) = \Pochhsymb{x+1-n}{n},
\end{equation}
whereas the {\em Eulerian numbers $\Eulerian{n}{k}$} are given by (cf. \cite[Equations~26.14.5 and 26.14.6]{Olver:2010:NHMF})
\begin{equation} \label{eq:Eulerian}
\sum_{k=0}^{n-1} \Eulerian{n}{k} \binom{x+k}{n} = x^n, \qquad \Eulerian{n}{k} = \sum_{j=0}^k (-1)^j \binom{n+1}{j} \left( k + 1 - j \right)^n \quad n \geq 1.
\end{equation}
There holds $\Eulerian{0}{0} = \Eulerian{n}{0} = 1$ and $\Eulerian{n}{n-1} = 1$. By convention, $\Eulerian{n}{k} = 0$ for $k \geq n$ and $(n,k)\neq(0,0)$.
\begin{thm} \label{thm:2}
Let $m$ and $N$ be positive integers. Then
\begin{equation*}
\begin{split}
\mathcal{M}_{2m}(N; r) 
&= \frac{N^2}{\gammafcn(m)} \sum_{k=0}^{m-1} \frac{\gammafcn(2m-k-1)}{\gammafcn(m-k) k!} \left[ - \Pochhsymb{1-m}{k} + 2 \sum_{q=0}^{k} \frac{g(k,q;N,1-m)}{\left( 1 - r^N \right)^{q+1}} \right] \\
&\phantom{=\pm}\times \left( 1 - r^2 \right)^{k+1-2m} - N \left( 1 - r \right)^{-2m}, \qquad 0 \leq r < 1,
\end{split}
\end{equation*}
where
\begin{equation*}
g(k,q;a,b) = \sum_{p=0}^{k-q} (-1)^{p} s(k,p+q;b) b(p+q,p) a^{p+q}
\end{equation*}
with
\begin{equation*}
b(p+q,p) = \sum_{j=0}^q \Eulerian{p+q}{j} \binom{p+q-j}{p}, \qquad \Stirlings(n, \ell; y) = \sum_{k=\ell}^n \binom{k}{\ell} \Stirlings(n,k) \left( y+n-1 \right)^{k-\ell}.
\end{equation*}

%
\end{thm}

We are interested in the limit as $r \to 1^-$. Clearly, $\mathcal{M}_{2m}(N) \DEF \lim_{r\to1^-} \mathcal{M}_{2m}(N; r) = \mathcal{L}_{2m}(N)$.
%

\begin{thm} \label{thm:even.positive.s}
Let $m$ and $N$ be positive integers. Then
\begin{equation} \label{eq:M.2m.N.expansion}
\mathcal{M}_{2}(N) = \frac{N^3}{12} - \frac{N}{12}, \qquad \mathcal{M}_{2m}(N) = \sum_{\nu = 0}^{2m} \beta_{2m-\nu}(m) \, N^{1+2m-\nu}, \quad m \geq 2,
\end{equation}
where the coefficients $\beta_\nu(m)$ ($0 \leq \nu \leq 2m$) do not depend on $N$. 
\end{thm}

Explicit expressions for $\beta_\nu(m)$ are obtained in the proof of Theorem~\ref{thm:even.positive.s} (see Equations~\eqref{eq:beta.coeff.s}). The expansion \eqref{eq:M.2m.N.expansion} holds for every integer $N \geq2$. The expansion \eqref{exp2xct}, derived from an asymptotic result as $N \to \infty$,  represents the same quantity and holds for sufficiently large $N$. The coefficients of the powers of $N$ in either expansion do not depend on $N$. By comparing \eqref{eq:beta.coeff.s} and~\eqref{exp2xct}, one has the connection formulas
\begin{equation*}
\beta_{2n+1}(m) = 0 \quad (0 \leq n < m), \qquad \beta_{2m-2n}(m) = \frac{2 \zetafcn(2m-2n)}{(2\pi)^{2m}} \alpha_n(2m) \quad (0 \leq n \leq m).
\end{equation*}
The coefficients $\alpha_n(2m)$ can be expressed in terms of {\em generalized Bernoulli polynomials} $B_k^{(\sigma)}(x)$: 
\begin{equation} \label{eq:alpha.n.s}
\alpha_n(2m) = \frac{(-1)^n B_{2n}^{(2m)}(m)}{(2n)!} \left( 2 \pi \right)^{2n}, \quad n \geq 0.
\end{equation}
For non-negative even integers, the Riemann zeta function can be expressed in terms of Bernoulli numbers by means of $\zeta(2m) = 2^{2m-1} (-1)^{m-1} B_{2m} \pi^{2m} / (2m)!$, $m = 0,1,2, \dots$.
This leads to 
\begin{equation*}
\beta_{2m-2n}(m) = \frac{(-1)^{m-n-1} B_{2m-2n}}{(2m-2n)!} \frac{(-1)^n B_{2n}^{(2m)}(m)}{(2n)!}, \qquad 0 \leq n \leq m.
\end{equation*}

For $n=0$, one gets the following identity.
\begin{cor} \label{cor:4}
For $m = 1, 2, \dots$, one has
\begin{equation*}
\frac{(-1)^{m-1} B_{2m}}{(2m)!} = \sum_{\nu=0}^{m-1} \sum_{p=0}^{\nu} (-1)^p \frac{2^{2+\nu-2m} \gammafcn(2m-1-\nu)}{\gammafcn(m) \nu! \gammafcn(m-\nu)} b(\nu,p) G(2m-p,2m-p,\nu-p+1),
\end{equation*} 
where the numbers $b(\nu,p)$ and $G(n,n,q)$ are given by
\begin{equation*}
b(k,p) = \sum_{j=0}^{k-p} \Eulerian{k}{j} \binom{k-j}{p}, \qquad \sum_{n=1}^\infty (-1)^n G(n,n,q) \, x^n = \left( \frac{e^x-1}{x} \right)^{-q}-1.
\end{equation*}
\end{cor}


\section{Proofs}
\label{sec:proofs}

The Gegenbauer polynomials $C_n^{(\lambda)}$ of degree $n$ with index $\alpha>0$ can be defined by means of 
\begin{equation} \label{eq:Gegenbauer.C.generating}
\sum_{n=0}^\infty z^n \, C_n^{(\lambda)}(\cos \phi) = \frac{1}{\left(1 - 2 z \cos \phi + z^2 \right)^{\lambda}}, \qquad | z | < 1, \lambda > 0.
\end{equation}
They can be represented through trigonometric functions (see \cite[Section~22]{AbSt1992}), 
\begin{equation} \label{eq:Gegenbauer.C.trigonometric}
C_n^{(\lambda)}(\cos \phi) = \sum_{\ell=0}^{n} \frac{\Pochhsymb{\lambda}{\ell} \Pochhsymb{\lambda}{n - \ell}}{\ell! (n - \ell)!} \cos[ ( n - 2 \ell ) \phi ], \qquad \lambda \neq 0.
\end{equation}
The summation formula~\eqref{eq:cos.identity} yields the following auxiliary result.
%

\begin{lem} \label{lem:aux.1}
Let $s > 0$. Then
\begin{equation*}
\mathcal{A}_n(s;N) \DEF N \sum_{k = 1}^{N-1} C_n^{(s/2)}(\cos \frac{2\pi k}{N}) = N^2 \sum_{\substack{\ell=0 \\ N | ( n - 2 \ell )}}^{n} \frac{\Pochhsymb{s/2}{\ell} \Pochhsymb{s/2}{n - \ell}}{\ell! (n - \ell)!} - N \, C_n^{(s/2)}(1).
\end{equation*}
\end{lem}


\begin{proof}[Proof of Lemma~\ref{lem:mod.energy}]
Using \eqref{eq:Gegenbauer.C.generating} in \eqref{eq:callM} and Lemma~\ref{lem:aux.1}, we obtain
\begin{align*}
\mathcal{M}_{s}( N; r ) 
&= \sum_{n=0}^\infty r^n \, N \sum_{k = 1}^{N-1} C_n^{(s/2)}(\cos \frac{2\pi k}{N}) = N^2  \, \sum_{n=0}^\infty r^n \sum_{\substack{\ell=0 \\ N | ( n - 2 \ell )}}^{n} \frac{\Pochhsymb{s/2}{\ell} \Pochhsymb{s/2}{n - \ell}}{\ell! (n - \ell)!} - N \sum_{n=0}^\infty r^n \, C_n^{(s/2)}(1).
\end{align*}
The last series has the closed form $(1-r)^{-s}$ (cf. \eqref{eq:Gegenbauer.C.generating}). One has
\begin{align*}
\sum_{\nu=0}^\infty r^{2\nu} \sum_{\substack{\ell=0 \\ N | ( 2\nu - 2 \ell )}}^{2\nu} \frac{\Pochhsymb{s/2}{\ell} \Pochhsymb{s/2}{2\nu - \ell}}{\ell! (2\nu - \ell)!} &= \sum_{\nu=0}^\infty \frac{\Pochhsymb{s/2}{\nu} \Pochhsymb{s/2}{\nu}}{\nu! \nu!} r^{2\nu} + 2 \sum_{\nu=0}^\infty r^{2\nu} \sum_{\substack{\ell = 1 \\ N | (2\ell)}}^{\nu} \frac{\Pochhsymb{s/2}{\nu-\ell} \Pochhsymb{s/2}{\nu + \ell}}{(\nu-\ell)! (\nu + \ell)!} \\
&= \Hypergeom{2}{1}{s/2,s/2}{1}{r^2} + 2 \sum_{\substack{\ell = 1 \\ N | (2\ell)}}^{\infty} \frac{\Pochhsymb{s/2}{2\ell}}{(2\ell)!} r^{2\ell} \Hypergeom{2}{1}{s/2,2\ell + s/2}{2\ell+1}{r^2}.
\end{align*}
In the last step \eqref{eq:Gauss.hypergeometric} was used. A similar formula (without the hypergeometric function part and $2\ell \mapsto 2\ell+1$) holds for the sum over odd powers of $r$. 
%
Putting everything together, we obtain
\begin{equation*}
\mathcal{M}_{s}( N; r ) = N^2 \, G_0(s; r) - \frac{N}{\left( 1 - r \right)^{s}} + 2 N^2 \sum_{\substack{n=1 \\ N | n}}^\infty G_n(s; r), \quad G_n(s; r) \DEF \frac{\Pochhsymb{s/2}{n}}{n!} r^{n} \Hypergeom{2}{1}{s/2,n + s/2}{n+1}{r^2}.
\end{equation*}
The series $\sum_{\nu=1}^\infty G_{\nu N}(s; r)$ converges uniformly with respect to $r$ on compact subsets of $[0,1)$, which can be seen from the integral representation (cf. \cite[Eq.~15.6.1]{Olver:2010:NHMF}) and estimate
\begin{align*}
G_n(s; r) 
&= r^n \frac{1}{[\gammafcn(s/2)]^2} \frac{\gammafcn(n+s/2)}{\gammafcn(n+1-s/2)} \int_0^1 \frac{t^{s/2-1} \left( 1 - t \right)^{n-s/2}}{\left( 1 - r^2 t \right)^{n+s/2}} \dd t \\
&\leq  c_s(r) \frac{\gammafcn(n+s/2)}{\gammafcn(n+1-s/2)} \, r^{n}, \qquad c_s(r) = \frac{1}{[\gammafcn(s/2)]^2} \int_0^1 \frac{t^{s/2-1}}{\left( 1 - r^2 t \right)^{s}} \dd t,
\end{align*}
valid for $n \geq n_0 > s / 2$, where $c_s(r)$ can be uniformly bounded on compact sets in $[0,1)$. Hence
\begin{equation*}
0 < \sum_{\nu=1}^\infty G_{\nu N}(s; r) \leq C + \sum_{n=n_0}^\infty G_{n}(s; r) \leq C + c_s(r) \sum_{n=n_0}^\infty \frac{\gammafcn(n+s/2)}{\gammafcn(n+1-s/2)} \, r^{n} \quad \text{for some $C>0$} 
\end{equation*}
and the right-most series above converges uniformly on compact sets in $[0,1)$ for $n_0 > s/2 > 0$.

Application of the last linear transformation in \cite[Equations~15.8.1]{Olver:2010:NHMF} yields
\begin{align*}
G_n(s; r) = \left( 1 - r^2 \right)^{1-s} \frac{\Pochhsymb{s/2}{n}}{n!} r^{n} \Hypergeom{2}{1}{1-s/2,n + 1 - s/2}{n+1}{r^2}.
\end{align*}
For $s$ a positive even integer, that is $s = 2m$, we can write (cf. \cite[Eq.~15.8.7]{Olver:2010:NHMF})
\begin{equation*}
G_n(2m; r) = \left( 1 - r^2 \right)^{1-2m} \frac{\Pochhsymb{m}{n}}{n!} \frac{\Pochhsymb{m}{m-1}}{\Pochhsymb{n+1}{m-1}} r^{n} \Hypergeom{2}{1}{1-m,n + 1 - m}{2-2m}{1-r^2},
\end{equation*}
where the ratios can be simplified further. This shows \eqref{eq:G.n.2m.r}.
\end{proof}

\begin{proof}[Proof of Theorem~\ref{thm:2}]
The series expansion of the hypergeometric polynomial in \eqref{eq:G.n.2m.r} yields
\begin{equation*}
G_n(2m; r) = \frac{2^{2m-2} \gammafcn(m-1/2)}{\sqrt{\pi}\, \gammafcn(m)} r^{n} \sum_{k=0}^{m-1} \frac{\Pochhsymb{1-m}{k} \Pochhsymb{n+1-m}{k}}{\Pochhsymb{2-2m}{k} k!} \left( 1 - r^2 \right)^{k+1-2m}.
\end{equation*}
Since
\begin{equation*}
\frac{\Pochhsymb{1-m}{k}}{\Pochhsymb{2-2m}{k}} = \binom{m-1}{k} \Big/ \binom{2m-2}{k} = \frac{\gammafcn(m) \gammafcn(2m-1-k)}{\gammafcn(m-k) \gammafcn(2m-1)}
\end{equation*}
and $\gammafcn(2m-1) = 2^{2m-2} \gammafcn(m-1/2) \gammafcn(m) / \sqrt{\pi}$ (duplication formula for $\gammafcn$), we obtain
\begin{equation*}
G_n(2m; r) = \frac{1}{\gammafcn(m)} r^{n} \sum_{k=0}^{m-1} \frac{\gammafcn(2m-1-k)}{\gammafcn(m-k) k!} \frac{\Pochhsymb{n+1-m}{k}}{k!} \left( 1 - r^2 \right)^{k+1-2m}.
\end{equation*}
By Lemma~\ref{lem:mod.energy} and the last representation we get
\begin{equation*}
\begin{split}
\mathcal{M}_{2m}(N; r) 
&= \frac{N^2}{\gammafcn(m)} \sum_{k=0}^{m-1} \frac{\gammafcn(2m-k-1)}{\gammafcn(m-k) k!} \frac{ \sum_{\nu = - \infty}^\infty \Pochhsymb{| \nu | N + 1 - m}{k} r^{| \nu | N} }{ \left( 1 - r^2 \right)^{2m-k-1} } - N \left( 1 - r \right)^{-2m}.
\end{split}
\end{equation*}
By Lemma~\ref{lem:aux.4}, the infinite series reduces to a rational function in $r^N$. The result follows. 
\end{proof}

\subsection*{Proof of Theorem~\ref{thm:even.positive.s}}

For the proof of Theorem~\ref{thm:even.positive.s} we need some preparations. We define 
\begin{equation*}
f_q(r) \DEF \left( \frac{1-r^N}{1-r} \right)^{-q}, \qquad h(r) \DEF \frac{1-r^N}{1-r} = \sum_{\ell = 0}^{N-1} r^\ell
\end{equation*}
and get a series expansion of $f_q(r)$ at $r = 1$. We use Fa\`{a} di Bruno's differentiation formula
\begin{align*}
\left\{ f(g(x)) \right\}^{(n)} 
&= \sum \frac{n!}{k_1! \cdots k_{n}!} f^{(k)}(g(x)) \prod_{\nu=1}^n \left( \frac{g^{(\nu)}(x)}{\nu!} \right)^{k_\nu} = \sum_{k=1}^n f^{(k)}(g(x)) \BellB_{n,k}(g^\prime(x), g^{\prime\prime}(x), \dots),
\end{align*}
where $k = k_1 + \cdots + k_{n}$ in the first sum and this sum is extended over all partition of $n$, that is integers $k_1, \dots, k_{n} \geq 0$ such that $k_1 + 2 k_2 + \cdots + n k_{n} = n$. The polynomials $\BellB_{n,k}(x_1, x_2, \dots)$ in the second sum are the {\em (partial) Bell polynomials}, explicitly given by
\begin{equation} \label{eq:Bell.polynomial}
\BellB_{n,k}(x_1, x_2, \dots) = \sum_{\substack{k_1 + k_2 + \cdots = k \\ k_1 + 2 k_2 + \cdots = n}} \frac{n!}{k_1! k_2! \cdots} \left( \frac{x_1}{1!} \right)^{k_1} \left( \frac{x_2}{2!} \right)^{k_2} \cdots.
\end{equation}
They satisfy the generating function relation (see, e.g., \cite{Mi2010})
\begin{equation} \label{eq:Bell.polynomial.generating.rel}
\sum_{n=k}^{\infty} \BellB_{n,k}(x_1, x_2, x_3, \dots) \frac{t^n}{n!} = \frac{1}{k!} \Big( \sum_{m=1}^\infty x_m \frac{t^m}{m!} \Big)^k.
\end{equation}
(In the polynomial $\BellB_{n,k}$ are only the variables $x_1, x_2, \dots, x_{n-k+1}$ active.) We record that
\begin{subequations} \label{eq:BellB.id.1}
\begin{align}
\BellB_{n,k}(\alpha x_1, \alpha x_2, \dots) &= \alpha^k \BellB_{n,k}(x_1, x_2, \dots), \qquad \alpha \neq 0, \\
\BellB_{n,k}(\alpha^1 x_1, \alpha^2 x_2, \dots) &= \alpha^n \BellB_{n,k}(x_1, x_2, \dots), \qquad \alpha \neq 0.
\end{align}
\end{subequations}

\begin{lem} \label{lem:aux.5}
Let $n$ be an integer $\geq 0$. Then
\begin{equation*}
h^{(n)}(1) = \frac{N \left( N - 1 \right) \cdots \left( N - n \right)}{n+1} = n! \binom{N}{n+1}. 
\end{equation*}
\end{lem}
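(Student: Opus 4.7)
The plan is to compute $h^{(n)}(1)$ by differentiating the \emph{polynomial} representation $h(r) = \sum_{\ell=0}^{N-1} r^\ell$ termwise, which avoids the indeterminate form one would face in differentiating the quotient $(1-r^N)/(1-r)$ directly at $r = 1$.

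First I would carry out the termwise differentiation of the finite geometric sum:
\[
h^{(n)}(r) = \sum_{\ell=n}^{N-1} \ell(\ell-1)\cdots(\ell-n+1)\, r^{\ell-n} = n! \sum_{\ell=n}^{N-1} \binom{\ell}{n} r^{\ell-n}.
\]
Evaluating at $r=1$ immediately gives
\[
h^{(n)}(1) = n! \sum_{\ell=n}^{N-1} \binom{\ell}{n}.
\]

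Next I would apply the hockey-stick identity $\sum_{\ell=n}^{N-1} \binom{\ell}{n} = \binom{N}{n+1}$, which follows by telescoping the Pascal relation $\binom{\ell+1}{n+1} - \binom{\ell}{n+1} = \binom{\ell}{n}$ for $\ell = n, n+1, \dots, N-1$ together with the boundary value $\binom{n}{n+1} = 0$. Substituting yields
\[
h^{(n)}(1) = n!\binom{N}{n+1} = \frac{n!\, N!}{(n+1)!\,(N-n-1)!} = \frac{N(N-1)\cdots(N-n)}{n+1},
\]
which is precisely the claimed identity.

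There is no genuine obstacle; the lemma is a routine combinatorial identity and the whole argument is three lines once the polynomial form of $h$ is used. The only observation worth making is that the cleaner route is to expand $h$ as a polynomial \emph{before} differentiating, since the rational form $(1-r^N)/(1-r)$ has a removable singularity at $r=1$ that would otherwise require Leibniz's rule combined with a limit argument to handle.
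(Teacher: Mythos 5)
Your proof is correct and follows essentially the same route as the paper: differentiate the polynomial form $h(r)=\sum_{\ell=0}^{N-1}r^\ell$ termwise and evaluate the resulting sum at $r=1$. The only cosmetic difference is that you close with the hockey-stick identity via telescoping, whereas the paper sums the equivalent Pochhammer expression $\sum_{\ell=1}^{N-n}\Pochhsymb{\ell}{n}$ by induction on $N$ (and, incidentally, your falling factorial $\ell(\ell-1)\cdots(\ell-n+1)$ is the correct one; the paper's displayed $\ell(\ell-1)\cdots(\ell-n)$ is a typo).
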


\begin{proof}
With the convention $\sum_{\ell = n}^M a_n = 0$ for $M < n$, one has
\begin{equation*}
h^{(n)}(r) = \left\{ \frac{1-r^N}{1-r} \right\}^{(n)} = \Big\{ \sum_{\ell = 0}^{N-1} r^\ell \Big\}^{(n)} = \sum_{\ell = 0}^{N-1} \left\{ r^\ell \right\}^{(n)} = \sum_{\ell = n}^{N-1} \ell \left( \ell - 1 \right) \cdots \left( \ell - n + 1 \right) r^{\ell-n}.
\end{equation*}
If $r = 1$, then
\begin{align*}
h^{(n)}(1) 
= \sum_{\ell = 1}^{N-n} \left( \ell + n - 1 \right) \cdots \left( \ell + 1 \right) \ell = \sum_{\ell = 1}^{N-n} \Pochhsymb{\ell}{n} = \frac{N \left( N - 1 \right) \cdots \left( N - n \right)}{n+1}.
\end{align*}
The last step follows by complete induction in $N$.
\end{proof}

\begin{lem} \label{lem:aux.6}
Let $q$ be a positive real number and $N$ a positive integer. Then for $m = 1, 2, \dots$
\begin{equation*}
f_q(r) = N^{-q} + \sum_{n=1}^{m} \frac{f_q^{(n)}(1)}{n!} \left( r - 1 \right)^n + \mathcal{R}_m(r), \qquad 0 < r < 1,
\end{equation*}
where the coefficients $f_q^{(n)}(1)$ can be written as
\begin{equation*}
f_q^{(n)}(1) = \sum_{k=1}^n (-1)^k \Pochhsymb{q}{k} N^{-q-k} \, \BellB_{n,k}(1! \binom{N}{2}, 2! \binom{N}{3}, 3! \binom{N}{4}, \dots)
\end{equation*}
and the remainder can be estimated by 
\begin{equation*}
\left| \mathcal{R}_m(r) \right| \leq \frac{\left( 1 - r \right)^{m+1}}{(m+1)!} \sum_{k=1}^{m+1} \Pochhsymb{q}{k} f_{q+k}(r) \, \BellB_{m+1,k}(1! \binom{N}{2}, 2! \binom{N}{3}, 3! \binom{N}{4}, \dots).
\end{equation*}
\end{lem}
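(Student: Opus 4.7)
Writing $f_q(r) = \phi(h(r))$ with $\phi(t) = t^{-q}$ (so $\phi^{(k)}(t) = (-1)^k \Pochhsymb{q}{k} t^{-q-k}$), I would apply the Fa\`{a} di Bruno formula recalled in the preparation to obtain
\[
f_q^{(n)}(r) = \sum_{k=1}^n (-1)^k \Pochhsymb{q}{k}\, h(r)^{-q-k}\, \BellB_{n,k}\bigl(h'(r), h''(r), \ldots \bigr), \qquad n \geq 1.
\]
Evaluating at $r=1$ via Lemma~\ref{lem:aux.5} (so $h(1) = N$ and $h^{(\nu)}(1) = \nu! \binom{N}{\nu+1}$) then yields the claimed expression for $f_q^{(n)}(1)$, while the $n=0$ term of the Taylor expansion is $f_q(1) = h(1)^{-q} = N^{-q}$. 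This gives the expansion part of the lemma.

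For the remainder I would use Taylor's theorem in Lagrange form at $r=1$: for each $r \in (0,1)$ there exists $\xi = \xi(r) \in (r,1)$ with $\mathcal{R}_m(r) = f_q^{(m+1)}(\xi)\,(r-1)^{m+1}/(m+1)!$. Substituting the Fa\`{a} di Bruno expansion and applying the triangle inequality (using $\Pochhsymb{q}{k} > 0$ since $q>0$, and the non-negativity of the coefficients of $\BellB_{n,k}$, which is evident from \eqref{eq:Bell.polynomial}), I would estimate
\[
\left| f_q^{(m+1)}(\xi) \right| \leq \sum_{k=1}^{m+1} \Pochhsymb{q}{k}\, h(\xi)^{-q-k}\, \BellB_{m+1,k}\bigl( h'(\xi), h''(\xi), \ldots \bigr).
\]

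To convert this to the form stated in the lemma I would then replace $\xi$ by $r$ in the $h$-power factor and by $1$ inside the Bell polynomials. Both replacements rest on a monotonicity observation: the polynomial $h^{(\nu)}(r) = \sum_{\ell = \nu}^{N-1} \ell(\ell-1)\cdots(\ell-\nu+1)\, r^{\ell-\nu}$ has non-negative coefficients, so $h^{(\nu)}$ is non-negative and non-decreasing on $[0,1]$ for every $\nu \geq 0$. Since $r < \xi < 1$ and $q+k > 0$, this gives $h(\xi)^{-q-k} \leq h(r)^{-q-k} = f_{q+k}(r)$ and, using again the non-negativity of the coefficients of $\BellB_{m+1,k}$, also $\BellB_{m+1,k}(h'(\xi), h''(\xi), \ldots) \leq \BellB_{m+1,k}(h'(1), h''(1), \ldots) = \BellB_{m+1,k}(1!\binom{N}{2}, 2!\binom{N}{3}, \ldots)$. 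Combining these two bounds with $|r-1|^{m+1} = (1-r)^{m+1}$ yields the claim.

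The only non-routine step is this monotonicity argument, but it is rendered essentially painless by the sign properties of $h$ and of the Bell polynomials; everything else is mechanical bookkeeping from Fa\`{a} di Bruno and Taylor's theorem.
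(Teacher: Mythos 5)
Your proposal is correct and follows essentially the same route as the paper: Fa\`{a} di Bruno applied to $f_q = \phi\circ h$ with $\phi(t)=t^{-q}$, evaluation at $r=1$ via Lemma~\ref{lem:aux.5}, Taylor's theorem with Lagrange remainder, and then the bound on $\left| f_q^{(m+1)}(\xi) \right|$ obtained from the positivity and monotonicity of the derivatives $h^{(\nu)}$ together with the non-negativity of the Bell polynomial coefficients. No substantive differences to report.
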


\begin{proof}
Let $m \geq 1$. By Taylor's theorem
\begin{equation*}
f_q(r) = \left( \frac{1-r^N}{1-r} \right)^{-q} = \sum_{n=0}^{m} \frac{f_q^{(n)}(1)}{n!} \left( r - 1 \right)^n + \mathcal{R}_m(r), \qquad \mathcal{R}_m(r) = \frac{f_q^{(m+1)}(\rho)}{(m+1)!} \left( r - 1 \right)^{m+1}
\end{equation*}
for some $\rho$ with $r < \rho < 1$.  By Fa\`{a} di Bruno's differentiation formula
\begin{align*}
f_q^{(n)}(r) 
&= \left\{ \left( \frac{1-r^N}{1-r} \right)^{-q} \right\}^{(n)} = \sum_{k=1}^n \left\{(\cdot)^{-q}\right\}^{(k)}(h(r)) \BellB_{n,k}(h^\prime(r), h^{\prime\prime}(r), \dots) \\
&= \sum_{k=1}^n (-1)^k \Pochhsymb{q}{k} \left( h(r) \right)^{-q-k} \, \BellB_{n,k}(h^\prime(r), h^{\prime\prime}(r), \dots) .
\end{align*}

The representation of $f_q^{(n)}(1)$ follows from the fact $h(1) = N$ and Lemma~\ref{lem:aux.5}.
%
%

The remainder $\mathcal{R}_m(r)$ is estimated next. Observe that the function $h^{(\nu)}(r)$ is positive and strictly monotonically increasing on $(0,1)$ for each $\nu \geq 0$. Thus, by the triangle inequality,
\begin{equation*}
\left| f_q^{(m+1)}(\rho) \right| \leq \sum_{k=1}^{m+1} \Pochhsymb{q}{k} \left( h(\rho) \right)^{-q-k} \left| \BellB_{m+1,k}(h^\prime(r), h^{\prime\prime}(r), \dots) \right|
\end{equation*}
and by \eqref{eq:Bell.polynomial}, the Bell polynomial is non-negative. 
The estimate follows.
\end{proof}

Let $\binom{a_1+\dots+a_k}{a_1,\dots,a_k}$ denote the multinomial coefficient.

The specific Bell polynomials appearing in Lemma~\ref{lem:aux.6} can be represented as follows.

\begin{lem} \label{lem:aux.7}
Let $n$, $k$ and $N$ be positive integers. Then
\begin{equation*}
\begin{split} \label{eq:Bell.Poly.Identity.A}
\BellB_{n,k}(1! \binom{N}{2}, 2! \binom{N}{3}, 3! \binom{N}{4}, \dots) 
&= \frac{n!}{k!} \mathop{\sum_{n_1=1}^{N-1} \cdots \sum_{n_k=1}^{N-1}}_{n_1+\cdots+n_k=n} \binom{N}{n_1+1} \cdots \binom{N}{n_k+1} \\
&= \frac{n!}{k! (n+k)!} \sum_{\ell=k}^{k+n} (-1)^{n-\ell+k} H_\ell(n,k) \, N^\ell,
\end{split}
\end{equation*}
where the coefficients
\begin{equation*}
H_\ell(n,k) = \mathop{\sum_{n_1=1}^{n} \cdots \sum_{n_k=1}^{n}}_{n_1+\cdots+n_k=n} \binom{n+k}{n_1+1,n_2+1,\dots,n_k+1} \mathop{\sum_{\ell_1=1}^{n_1+1} \cdots \sum_{\ell_k=1}^{n_k+1}}_{\ell_1+\cdots+\ell_k=\ell} \left| \Stirlings(n_1+1, \ell_1) \right| \cdots \left| \Stirlings(n_k+1, \ell_k) \right|.
\end{equation*}
vanish for $\ell = 0, 1, \dots, k - 1$ and do not depend on $N$. 
\end{lem}

\begin{proof}
By the generating function relation for partial Bell polynomials \eqref{eq:Bell.polynomial.generating.rel}
\begin{equation*}
\sum_{n = k}^\infty \BellB_{n,k}(1! \binom{N}{2}, 2! \binom{N}{3}, 3! \binom{N}{4}, \dots) \frac{t^n}{n!} = \frac{1}{k!} \left( \sum_{m=1}^\infty \binom{N}{m+1} t^m \right)^k.
\end{equation*}
The right-hand side above evaluates as 
\begin{align*}
\frac{1}{k!} \left[ \frac{\left( 1 + t \right)^N - 1 - N \, t}{t} \right]^k 
&= \sum_{\nu=k}^{k(N-1)} \Bigg\{ \frac{\nu!}{k!} \mathop{\sum_{\nu_1=1}^{N-1} \cdots \sum_{\nu_k=1}^{N-1}}_{\nu_1 + \cdots + \nu_k = \nu} \binom{N}{\nu_1+1} \cdots \binom{N}{\nu_k+1} \Bigg\} \frac{t^\nu}{\nu!}.
\end{align*}
Comparison of coefficients yields the first identity \eqref{eq:Bell.Poly.Identity.A}. 
Let $\fallingFactorial{x}{n}$ denote the {\em falling factorial} $\fallingFactorial{x}{n} = x ( x - 1 ) \cdots ( x - n + 1)$. Taking into account that $\binom{N}{\nu+1} = \fallingFactorial{N}{\nu+1} / (\nu+1)!$
vanishes if $\nu+1 > N$ for the positive integer $N$, we may write
\begin{equation*}
\BellB_{n,k}(1! \binom{N}{2}, 2! \binom{N}{3}, 3! \binom{N}{4}, \dots) = \frac{n!}{k!} \mathop{\sum_{n_1=1}^{n} \cdots \sum_{n_k=1}^{n}}_{n_1+\cdots+n_k=n} \frac{\fallingFactorial{N}{n_1+1}}{(n_1+1)!} \cdots \frac{\fallingFactorial{N}{n_k+1}}{(n_k+1)!}.
\end{equation*}
By the definition of Stirling numbers of the first kind and using that $n_1+\cdots + n_k = n$, one has
\begin{align*}
\frac{\fallingFactorial{N}{n_1+1}}{(n_1+1)!} \cdots \frac{\fallingFactorial{N}{n_k+1}}{(n_k+1)!} 
&= \prod_{\nu=1}^k \left[ \frac{1}{(n_\nu+1)!} \sum_{\ell=0}^{n_\nu+1} s(n_\nu+1, \ell) \, N^\ell \right] \\
&= \sum_{\ell=0}^{k+n} \Bigg\{ \mathop{\sum_{\ell_1=0}^{n_1+1} \cdots \sum_{\ell_k=0}^{n_k+1}}_{\ell_1 + \cdots + \ell_k = \ell} \frac{s(n_1+1, \ell_1) \cdots s(n_k+1, \ell_k)}{(n_1+1)! \cdots (n_k+1)!} \Bigg\} N^\ell.
\end{align*}
When expanding the left-hand side above in powers of $N$, we infer that the lowest power appearing at the right-hand side above is $N^k$. That is
\begin{equation*}
\mathop{\sum_{\ell_1=0}^{n_1+1} \cdots \sum_{\ell_k=0}^{n_k+1}}_{\ell_1 + \cdots + \ell_k = \ell} \frac{s(n_1+1, \ell_1) \cdots s(n_k+1, \ell_k)}{(n_1+1)! \cdots (n_k+1)!} = 0 \qquad \text{for $\ell = 0, 1, \dots, k - 1$.}
\end{equation*}
Putting everything together, we get the identity
\begin{equation*}
\BellB_{n,k}(1! \binom{N}{2}, 2! \binom{N}{3}, 3! \binom{N}{4}, \dots) = \frac{n!}{k!} \sum_{\ell=k}^{k+n} \mu_\ell(n,k;N) \, N^\ell
\end{equation*}
and the representation 
\begin{equation*}
\mu_\ell(n,k) = \mathop{\sum_{n_1=1}^{n} \cdots \sum_{n_k=1}^{n}}_{n_1+\cdots+n_k=n} \mathop{\sum_{\ell_1=1}^{n_1+1} \cdots \sum_{\ell_k=1}^{n_k+1}}_{\ell_1+\cdots+\ell_k=\ell} \frac{\Stirlings(n_1+1, \ell_1) \cdots \Stirlings(n_k+1, \ell_k)}{(n_1+1)! \cdots (n_k+1)!}.
\end{equation*}

It is well-known that $(-1)^{n-k} \Stirlings(n,k) > 0$ for $n \geq k$. Thus
\begin{equation*}
\prod_{\nu=1}^k \Stirlings(n_\nu+1, \ell_\nu) = \prod_{\nu=1}^k (-1)^{n_\nu + 1 - \ell_\nu} \left| \Stirlings(n_\nu+1, \ell_\nu) \right| = (-1)^{n_1 + \cdots + n_k + k - \ell_1 - \dots - \ell_k} \left| \prod_{\nu=1}^k \Stirlings(n_\nu+1, \ell_\nu) \right| 
\end{equation*}
and from $n_1 + \cdots + n_k = n$ and $\ell_1 + \cdots + \ell_k = \ell$ it follows that
\begin{equation*}
\BellB_{n,k}(1! \binom{N}{2}, 2! \binom{N}{3}, 3! \binom{N}{4}, \dots) = \frac{n!}{k!} \sum_{\ell=k}^{k+n} (-1)^{n-\ell+k} \left| \mu_\ell(n,k) \right| N^\ell.
\end{equation*}

By definition $(-1)^{n-k} \Stirlings(n,k)$ counts the permutations of ${1,2, \dots,n}$ with exactly $k$ cycles. With this combinatorial interpretation in mind we can write
\begin{equation*}
\BellB_{n,k}(1! \binom{N}{2}, 2! \binom{N}{3}, 3! \binom{N}{4}, \dots) = \frac{n!}{k! (n+k)!} \sum_{\ell=k}^{k+n} (-1)^{n-\ell+k} H_\ell(n,k) \, N^\ell,
\end{equation*}
where the coefficients $H_\ell(n,k)$ (which do not depend on $N$) are given in the lemma.
\end{proof}


\begin{lem} \label{lem:aux.7b}
Let $n$ ($\geq 1$) and $k$ ($\geq 1$) be integers. Then
\begin{align*}
H_{n+k}(n,n) &= 2^{-n} (2n)! \binom{n}{k}, \\
H_{n+k}(n,k) &= \mathop{\sum_{n_1=2}^{n+1} \cdots \sum_{n_k=2}^{n+1}}_{n_1+\cdots+n_k=n+k} \binom{n+k}{n_1,n_2,\dots,n_k} = \frac{k! (n+k)!}{n!} \BellB_{n,k}( \frac{1}{2}, \frac{1}{3}, \dots).
\end{align*}
\end{lem}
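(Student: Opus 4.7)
The plan is to read off both identities directly from the definition of $H_\ell(n,k)$ in Lemma~\ref{lem:aux.7}, exploiting that $\ell = n+k$ is the maximum value of $\sum_\nu \ell_\nu$ allowed by the constraints $\ell_\nu \le n_\nu + 1$ together with $n_1 + \cdots + n_{k'} = n$.

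For the first identity I specialize to $H_{n+k}(n,n)$, so the outer sum runs over $n_1,\ldots,n_n \ge 1$ with $\sum_\nu n_\nu = n$. These constraints force $n_\nu = 1$ for every $\nu$, so the multinomial coefficient collapses to $\binom{2n}{2,\ldots,2} = (2n)!/2^n$. The inner Stirling sum then ranges over $\ell_\nu \in \{1,2\}$ with $\sum_\nu \ell_\nu = n + k$, which means exactly $k$ of the $\ell_\nu$'s equal $2$; since $|\Stirlings(2,1)| = |\Stirlings(2,2)| = 1$, there are $\binom{n}{k}$ unit contributions, and the identity $H_{n+k}(n,n) = 2^{-n}(2n)!\binom{n}{k}$ follows.

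For the multinomial form of the second identity, observe that when the outer parameter is $k$ rather than $n$, the maximum of $\sum_\nu \ell_\nu$ is $\sum_\nu(n_\nu+1) = n+k$, which equals the prescribed index. Hence the inner sum collapses to the single term $\ell_\nu = n_\nu + 1$ for every $\nu$, and $|\Stirlings(m,m)|=1$ erases all Stirling factors. After relabelling $m_\nu = n_\nu + 1 \ge 2$ the expression becomes the multinomial sum stated in the lemma; the upper bound $n+1$ is loose but consistent, since the remaining constraints $m_\mu \ge 2$ already force $m_\nu \le n-k+2 \le n+1$.

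To identify this multinomial sum with the Bell polynomial value, I apply the generating function \eqref{eq:Bell.polynomial.generating.rel} with $x_m = 1/(m+1)$. Summing
$$\sum_{m=1}^\infty \frac{1}{m+1}\frac{t^m}{m!} = \sum_{m=1}^\infty \frac{t^m}{(m+1)!} = \frac{e^t - 1 - t}{t},$$
raising to the $k$th power and dividing by $k!$ gives
$$\frac{(e^t - 1 - t)^k}{k!\, t^k} = \sum_{n=k}^\infty \BellB_{n,k}(\tfrac12,\tfrac13,\ldots)\,\frac{t^n}{n!}.$$
Extracting the coefficient of $t^{n+k}$ from $(e^t - 1 - t)^k = \sum_{m_1,\ldots,m_k \ge 2} t^{m_1+\cdots+m_k}/(m_1!\cdots m_k!)$ and multiplying by $(n+k)!$ yields the claimed formula. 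The only mildly technical step is the generating-function manipulation isolating the factor $(e^t - 1 - t)/t$; the combinatorial reductions for both identities are routine once the boundary condition $\ell = n+k$ is exploited.
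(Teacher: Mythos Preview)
Your argument is correct. The first identity and the multinomial form of the second identity are handled exactly as in the paper: the constraint $\ell = n+k$ forces either $n_1=\cdots=n_n=1$ (first case) or $\ell_\nu = n_\nu+1$ for all $\nu$ (second case), and the Stirling factors collapse to $1$ in both.

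For the Bell-polynomial identification the paper proceeds differently. Rather than computing $\BellB_{n,k}(\tfrac12,\tfrac13,\dots)$ directly, it recalls from Lemma~\ref{lem:aux.7} that $\tfrac{n!}{k!(n+k)!}H_{n+k}(n,k)$ is the coefficient of the highest power $N^{n+k}$ in $\BellB_{n,k}(1!\binom{N}{2},2!\binom{N}{3},\dots)$, then uses the asymptotics $\nu!\binom{N}{\nu+1}=\tfrac{1}{\nu+1}N^{\nu+1}(1+\mathcal{O}(N^{-1}))$ together with the homogeneity relations \eqref{eq:BellB.id.1} to extract that leading coefficient as $\BellB_{n,k}(\tfrac12,\tfrac13,\dots)$. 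Your route---plugging $x_m=1/(m+1)$ into the generating function \eqref{eq:Bell.polynomial.generating.rel}, recognising $\sum_{m\ge1}t^m/(m+1)!=(e^t-1-t)/t$, and matching coefficients of $t^n$ with the multinomial expansion of $(e^t-1-t)^k$---is more self-contained and avoids the asymptotic detour, at the cost of not reusing Lemma~\ref{lem:aux.7}. Both are short; yours is arguably the cleaner derivation.
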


\begin{proof}
By Lemma~\ref{lem:aux.7},
\begin{equation*}
H_{n+k}(n,n) = \mathop{\sum_{n_1=1}^{n} \cdots \sum_{n_n=1}^{n}}_{n_1+\cdots+n_n=n} \binom{2n}{n_1+1,n_2+1,\dots,n_n+1} \mathop{\sum_{\ell_1=1}^{n_1+1} \cdots \sum_{\ell_n=1}^{n_n+1}}_{\ell_1+\cdots+\ell_n=n+k} \left| \Stirlings(n_1+1, \ell_1) \right| \cdots \left| \Stirlings(n_n+1, \ell_n) \right|.
\end{equation*}
It follows that $n_1 = \cdots = n_n = 1$. Hence, the multinomial coefficient evaluates as $2^{-n}(2n)!$ and 
\begin{equation*}
H_{n+k}(n,n) = 2^{-n}(2n)! \mathop{\sum_{\ell_1=0}^{1} \cdots \sum_{\ell_n=0}^{1}}_{\ell_1+\cdots+\ell_n=k} \left| \Stirlings(2, \ell_1+1) \right| \cdots \left| \Stirlings(2, \ell_n+1) \right|.
\end{equation*}
Since $\Stirlings(2,1) = -1$ and $\Stirlings(2,2) = 1$, the sum counts the number of $n$-tuples $(\ell_1, \dots, \ell_n)$ with precisely $k$ of the entries being $1$ which is equal to $\binom{n}{k}$. The result follows.

Similarly
\begin{equation*}
\begin{split}
H_{n+k}(n,k) 
&= \mathop{\sum_{n_1=1}^{n} \cdots \sum_{n_k=1}^{n}}_{n_1+\cdots+n_k=n} \binom{n+k}{n_1+1,n_2+1,\dots,n_k+1}  \mathop{\sum_{\ell_1=1}^{n_1+1} \cdots \sum_{\ell_k=1}^{n_k+1}}_{\ell_1+\cdots+\ell_k=n+k} \left| \Stirlings(n_1+1, \ell_1) \right| \cdots \left| \Stirlings(n_k+1, \ell_k) \right|.
\end{split}
\end{equation*}
The inner sum reduces to a single term with $\ell_1 = n_1 + 1$, \dots, $\ell_k = n_k + 1$ which equals $1$. Thus
\begin{equation*}
H_{n+k}(n,k) = \mathop{\sum_{n_1=1}^{n} \cdots \sum_{n_k=1}^{n}}_{n_1+\cdots+n_k=n} \binom{n+k}{n_1+1,n_2+1,\dots,n_k+1} = \mathop{\sum_{n_1=2}^{n+1} \cdots \sum_{n_k=2}^{n+1}}_{n_1+\cdots+n_k=n+k} \binom{n+k}{n_1,n_2,\dots,n_k}.
\end{equation*}
On the other hand, by Lemma~\ref{lem:aux.7}, $\frac{n! H_{n+k}(n,k)}{k! (n+k)!} $ 
is the coefficient of the highest power of $N$ in
\begin{equation*}
\BellB_{n,k}(1! \binom{N}{2}, 2! \binom{N}{3}, 3! \binom{N}{4}, \dots).
\end{equation*}
From
\begin{equation*}
\nu ! \binom{N}{\nu+1} = \frac{1}{\nu+1} N^{\nu+1} \left( 1 + \mathcal{O}(N^{-1}) \right) \qquad \text{as $N \to \infty$}
\end{equation*}
and the identities \eqref{eq:BellB.id.1}, we obtain (as $N \to \infty$)
\begin{equation*}
\BellB_{n,k}(1! \binom{N}{2}, 2! \binom{N}{3}, 3! \binom{N}{4}, \dots) = N^{n+k} \, \BellB_{n,k}( \frac{1 + \mathcal{O}(N^{-1})}{2}, \frac{1 + \mathcal{O}(N^{-1})}{3}, \dots). 
\end{equation*}
It follows that
\begin{equation*}
\frac{n!}{k! (n+k)!} H_{n+k}(n,k) = \BellB_{n,k}( \frac{1}{2}, \frac{1}{3}, \dots), \qquad n \geq k \geq 1. 
\end{equation*}
\end{proof}

Lemma~\ref{lem:aux.7} allows us to recast the representation of $f_q^{(n)}(1)$ given in Lemma~\ref{lem:aux.6}.
\begin{cor} \label{cor:aux.8}
Let $q$, $n$ and $N$ be positive integer. Then (with $H_{\ell+k}(n,k)$ of Lemma~\ref{lem:aux.7})
\begin{equation*}
\frac{f_q^{(n)}(1)}{n!} = \sum_{\ell=0}^{n} (-1)^\ell G(n,\ell,q) \, N^{\ell-q}, \qquad G(n,\ell,q) = \sum_{k=1}^n \frac{(-1)^{n-k}\Pochhsymb{q}{k}}{k! (n+k)!} H_{\ell+k}(n,k).
\end{equation*}
\end{cor}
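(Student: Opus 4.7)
The plan is to obtain this corollary by direct substitution: insert the polynomial-in-$N$ expansion of the Bell polynomial from Lemma~\ref{lem:aux.7} into the Fa\`a-di-Bruno formula for $f_q^{(n)}(1)$ proved in Lemma~\ref{lem:aux.6}, and then reindex the double sum so that the outer summation is over powers of $N$ instead of over the differentiation-index $k$.

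Concretely, I would start from
\begin{equation*}
\frac{f_q^{(n)}(1)}{n!} = \frac{1}{n!}\sum_{k=1}^n (-1)^k \Pochhsymb{q}{k}\, N^{-q-k}\, \BellB_{n,k}(1!\tbinom{N}{2},2!\tbinom{N}{3},\dots)
\end{equation*}
and substitute
\begin{equation*}
\BellB_{n,k}(1!\tbinom{N}{2},2!\tbinom{N}{3},\dots) = \frac{n!}{k!(n+k)!}\sum_{\ell=k}^{k+n} (-1)^{n-\ell+k} H_\ell(n,k)\, N^\ell.
\end{equation*}
This produces a double sum in which the typical term carries the factor $N^{\ell-q-k}$. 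I would then perform the change of variable $\ell \mapsto \ell+k$ in the inner sum, so that the new index $\ell$ ranges from $0$ to $n$ uniformly in $k$ and the power of $N$ becomes $N^{\ell-q}$. Swapping the order of summation then places $\sum_{\ell=0}^{n}$ on the outside.

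The only remaining task is a sign bookkeeping check: after the shift, the product of signs from Lemma~\ref{lem:aux.6} and Lemma~\ref{lem:aux.7} is $(-1)^{k}\cdot(-1)^{n-(\ell+k)+k} = (-1)^{n+k-\ell}$, and since $(-1)^{-\ell}=(-1)^\ell$ this equals $(-1)^\ell\cdot(-1)^{n-k}$. Factoring the $(-1)^\ell$ out of the inner sum matches exactly the definition of $G(n,\ell,q)$ in the statement. There is no real obstacle here; the whole argument is a single substitute-and-reindex step, and the mild subtlety is only in verifying that the parity manipulations align so that the outer factor comes out as $(-1)^\ell$ rather than $(-1)^{n-\ell}$.
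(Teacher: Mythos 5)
Your proposal is correct and follows essentially the same route as the paper: substitute the polynomial-in-$N$ representation of $\BellB_{n,k}(1!\binom{N}{2},2!\binom{N}{3},\dots)$ from Lemma~\ref{lem:aux.7} into the formula for $f_q^{(n)}(1)$ of Lemma~\ref{lem:aux.6}, shift the inner index $\ell\mapsto\ell+k$, and interchange the sums, with the sign check $(-1)^{k}(-1)^{n-\ell}=(-1)^{\ell}(-1)^{n-k}$ coming out exactly as you describe. No gaps.
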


\begin{proof}
From Lemmas~\ref{lem:aux.6} and \ref{lem:aux.7}
\begin{align*}
\frac{f_q^{(n)}(1)}{n!} 
&= \frac{1}{n!} \sum_{k=1}^n (-1)^k \Pochhsymb{q}{k} N^{-q-k} \, \BellB_{n,k}(1! \binom{N}{2}, 2! \binom{N}{3}, 3! \binom{N}{4}, \dots) \\
&= \frac{1}{n!} \sum_{k=1}^n (-1)^k \Pochhsymb{q}{k} N^{-q-k} \, \frac{n!}{k! (n+k)!} \sum_{\ell=k}^{k+n} (-1)^{n-\ell+k} H_\ell(n,k) \, N^\ell. \\
\end{align*}
The result follows by reordering terms.
\end{proof}

\begin{lem} \label{cor:aux.8b}
Let $q$ be a positive integer. Then one has the generating function relation
\begin{equation} \label{eq:generating.fcn.relation}
\sum_{n=1}^\infty (-1)^n G(n,n,q) \, x^n = \left( \frac{e^x-1}{x} \right)^{-q}-1.
\end{equation}
\end{lem}

\begin{proof}
By Corollary~\ref{cor:aux.8} and Lemma~\ref{lem:aux.7b}
\begin{align*}
(-1)^n G(n,n,q) = \sum_{k=1}^n \frac{(-1)^{k}\Pochhsymb{q}{k}}{k! (n+k)!} H_{n+k}(n,k) = \frac{1}{n!} \sum_{k=1}^n (-1)^{k}\Pochhsymb{q}{k} \BellB_{n,k}(1/2, 1/3, \dots).
\end{align*}
Application of the identities (cf. Fa\`{a} di Bruno's differentiation formula)
\begin{equation*}
g(f(x)) = \sum_{n=1}^\infty \frac{\sum_{k=1}^n b_k \, \BellB_{n,k}(a_1, a_2, \dots)}{n!} x^n, \qquad f(x) = \sum_{n=1}^\infty \frac{a_n}{n!} x^n, \quad g(y) = \sum_{n=1}^\infty \frac{b_n}{n!} y^n,
\end{equation*}
to the functions
\begin{equation*}
f(x) = \sum_{n=1}^\infty \frac{x^n}{(n+1)!} = \frac{e^x-1-x}{x}, \qquad g(y) = \sum_{n=1}^\infty \frac{(-1)^n \Pochhsymb{q}{n}}{n!} y^n = \left( 1 + y \right)^{-q} - 1
\end{equation*}
yields $\sum_{n=1}^\infty (-1)^n G(n,n,q) \, x^n = g(f(x)) = \left( \frac{e^x-1}{x} \right)^{-q}-1$.
\end{proof}

With these preparations we are ready to give the following proof.

\begin{proof}[Proof of Theorem~\ref{thm:even.positive.s}]
We rearrange $\mathcal{M}_{2m}(N; r)$ in Theorem~\ref{thm:2} w.r.t. powers of $(1-r)$; i.e., 
\begin{align*}
&\mathcal{M}_{2m}(N; r) + N \left( 1 - r \right)^{-2m}= - \frac{N^2}{\gammafcn(m)} \sum_{k=0}^{m-1} \frac{\gammafcn(2m-k-1)}{\gammafcn(m-k) k!} \Pochhsymb{1-m}{k} \left( 1 + r \right)^{k+1-2m} \left( 1 - r \right)^{k+1-2m} \\
&\phantom{===}+ 2 \frac{N^2}{\gammafcn(m)} \sum_{k=0}^{m-1} \sum_{q=0}^{k} \frac{\gammafcn(2m-k-1)}{\gammafcn(m-k) k!} g(k,q;N,1-m) \left( \frac{1 - r^N}{1 - r} \right)^{-q-1} \left( 1 + r \right)^{k+1-2m} \left( 1 - r \right)^{k-q-2m}.
\end{align*}
Note that $k+1-2m<0$. Using Taylor expansion (cf. Lemma~\ref{lem:aux.6}, since $f_q(r) = ( 1 + r )^{-q}$ if $N = 2$)
\begin{equation*}
\left( 1 + r \right)^{-q} = \sum_{n=0}^m \frac{\Pochhsymb{q}{n}}{n!} 2^{-q-n} \left( 1 - r \right)^n + \tilde{\mathcal{R}}_m(r), \qquad \left| \tilde{\mathcal{R}}_m(r) \right| \leq \frac{\Pochhsymb{q}{m+1}}{(m+1)!} \left( 1 - r \right)^{m+1},
\end{equation*}
one gets ($N$ is assumed to be fixed)
\begin{align*}
&\mathcal{M}_{2m}(N; r) + N \left( 1 - r \right)^{-2m} = - \frac{N^2}{\gammafcn(m)} \sum_{k=0}^{m-1} \frac{\gammafcn(2m-k-1)}{\gammafcn(m-k) k!} \Pochhsymb{1-m}{k} \\
&\phantom{===\pm}\times \sum_{n=0}^{2m-1-k} \frac{\Pochhsymb{2m-1-k}{n}}{n!} 2^{k-n+1-2m} \left( 1 - r \right)^{k+n+1-2m} + \mathcal{O}((1-r)) \\
&\phantom{===}+ 2 \frac{N^2}{\gammafcn(m)} \sum_{k=0}^{m-1} \sum_{q=0}^{k} \frac{\gammafcn(2m-k-1)}{\gammafcn(m-k) k!} g(k,q;N,1-m) \left( \frac{1 - r^N}{1 - r} \right)^{-q-1} \\
&\phantom{===\pm}\times \sum_{n=0}^{2m+q-k} \frac{\Pochhsymb{2m-1-k}{n}}{n!} 2^{k-n+1-2m} \left( 1 - r \right)^{n+k-q-2m} + \mathcal{O}((1-r)) \\
&\phantom{==}= - \frac{N^2}{\gammafcn(m)} \sum_{k=0}^{m-1} \sum_{n=0}^{2m-1-k} \frac{\gammafcn(n+2m-k-1)}{\gammafcn(m-k) k! n!} \Pochhsymb{1-m}{k} 2^{k+1-2m-n} \left( 1 - r \right)^{n+k+1-2m} \\
&\phantom{===}+ 2 \frac{N^2}{\gammafcn(m)} \sum_{k=0}^{m-1} \sum_{q=0}^{k} \sum_{n=0}^{2m+q-k} \frac{\gammafcn(n+2m-k-1)}{\gammafcn(m-k) k! n!} g(k,q;N,1-m) \left( \frac{1 - r^N}{1 - r} \right)^{-q-1} \\
&\phantom{===\pm}\times 2^{k+1-2m-n} \left( 1 - r \right)^{n+k-q-2m} 
+ \mathcal{O}((1-r)) \qquad \text{as $r \to 1^-$.}
\end{align*}
By Lemma~\ref{lem:aux.6}
\begin{align*}
&\mathcal{M}_{2m}(N; r) + N \left( 1 - r \right)^{-2m} \\
&\phantom{=}=- \frac{N^2}{\gammafcn(m)} \sum_{k=0}^{m-1} \sum_{n=0}^{2m-1-k} \frac{\gammafcn(n+2m-k-1)}{\gammafcn(m-k) k! n!} \Pochhsymb{1-m}{k} 2^{k+1-2m-n} \left( 1 - r \right)^{n+k+1-2m} \\
&\phantom{===}+ 2 \frac{N^2}{\gammafcn(m)} \sum_{k=0}^{m-1} \sum_{q=0}^{k} \sum_{n=0}^{2m+q-k} \frac{\gammafcn(n+2m-k-1)}{\gammafcn(m-k) k! n!} g(k,q;N,1-m) N^{-1-q} 2^{k+1-2m-n} \left( 1 - r \right)^{n+k-q-2m} \\
&\phantom{===}+ 2 \frac{N^2}{\gammafcn(m)} \sum_{k=0}^{m-1} \sum_{q=0}^{k} \sum_{n=0}^{2m+q-k} \sum_{\nu=1}^{2m+q-k-n} \frac{\gammafcn(n+2m-k-1)}{\gammafcn(m-k) k! n!} g(k,q;N,1-m) \frac{(-1)^\nu f_{q+1}^{(\nu)}(1)}{\nu!} \\
&\phantom{===\pm}\times 2^{k+1-2m-n} \left( 1 - r \right)^{\nu+n+k-q-2m} + \mathcal{O}((1-r)) \qquad \text{as $r \to 1^-$.}
\end{align*}
The power $(1-r)^{-2m}$ only appears in the triple sum when $q = k$ and $n = 0$. Its coefficient is 
\begin{equation*}
c_{-2m} \DEF 2 \frac{N^2}{\gammafcn(m)} \sum_{k=0}^{m-1} \frac{\gammafcn(2m-k-1)}{\gammafcn(m-k) k!} g(k,k;N,1-m) N^{-1-k} 2^{k+1-2m}.
\end{equation*}
By the formulas in Theorem~\ref{thm:2} and the properties of Stirling and Eulerian numbers (cf. \eqref{eq:b.n.0})
\begin{equation*}
g(k,k; N, 1 - m) = \Stirlings(k,k;1-m) b(k,0) \, N^k = s(k,k) \,  N^k \sum_{j=0}^k \Eulerian{k}{j} = s(k,k) k! N^k = k! N^k.
\end{equation*}
Thus, by \cite[Eq.~4.2.3.10]{PrBrMa1986I},
\begin{equation*}
c_{-2m} = N 2^{2-2m} \sum_{k=0}^{m-1} \frac{\gammafcn(2m-k-1)}{\Gamma(m)\gammafcn(m-k)} 2^k = N 2^{2-2m} \sum_{k=0}^{m-1} \binom{2m-2-k}{m-1} 2^k = N.
\end{equation*}
Now the term $N (1-r)^{-2m}$ can be cancelled in above relation. 
First, we reorder the terms 
\begin{align*}
&\mathcal{M}_{2m}(N; r) + N \left( 1 - r \right)^{-2m} \\
%
&\phantom{==}= - \frac{N^2}{\gammafcn(m)} \sum_{k=0}^{m-1} \sum_{n=0}^{m+k} \frac{\gammafcn(2(m+k)-n)}{\gammafcn(m-k) k! (m+k-n)!} \Pochhsymb{1-m}{m-1-k} 2^{n-2(m+k)} \left( 1 - r \right)^{-n} \\
&\phantom{===}+ 2 \frac{N^2}{\gammafcn(m)} \sum_{k=0}^{m-1} \sum_{q=0}^{k} \sum_{n=0}^{2m+q-k} \frac{\gammafcn(2(2m-1-k)+q-n+1)}{\gammafcn(m-k) k! (2m+q-k-n)!} g(k,q;N,1-m) N^{-1-q} \\
&\phantom{===\pm}\times 2^{n-q-1-2(2m-1-k)} \left( 1 - r \right)^{-n} \\
&\phantom{===}+ 2 \frac{N^2}{\gammafcn(m)} \sum_{k=0}^{m-1} \sum_{q=0}^{k} \sum_{n=0}^{2m+q-k} \sum_{\nu=0}^{n-1} \frac{\gammafcn(2(2m-1-k)+q-n+1)}{\gammafcn(m-k) k! (2m+q-k-n)!} g(k,q;N,1-m) \\
&\phantom{===\pm}\times \frac{(-1)^{n-\nu} f_{q+1}^{(n-\nu)}(1)}{(n-\nu)!} 2^{n-q-1-2(2m-1-k)} \left( 1 - r \right)^{-\nu} + \mathcal{O}((1-r)) 
, \qquad \text{as $r \to 1^-$.}
\end{align*}
Since $\lim_{r\to1^-} \mathcal{M}_{2m}(N; r) = \mathcal{M}_{2m}(N)$ and $\mathcal{M}_{2m}(N)$ exists and is finite for all positive integers $m$ and $N$ with $N\geq2$ (and $N (1-r)^{-2m}$ can be cancelled), it follows that the cumulative coefficient of a negative power of $(1-r)$ is zero and we are left with
\begin{align*}
&\mathcal{M}_{2m}(N) 
= - \frac{N^2}{\gammafcn(m)} \sum_{k=0}^{m-1} \frac{\gammafcn(2(m+k))}{\gammafcn(m-k) k! (m+k)!} \Pochhsymb{1-m}{m-1-k} 2^{-2(m+k)} \\
&\phantom{=}+ 2 \frac{N^2}{\gammafcn(m)} \sum_{k=0}^{m-1} \sum_{q=0}^{k} \frac{\gammafcn(2(2m-1-k)+q+1)}{\gammafcn(m-k) k! (2m+q-k)!} g(k,q;N,1-m) N^{-1-q} 2^{-q-1-2(2m-1-k)} \\
&\phantom{=}+ 2 \frac{N^2}{\gammafcn(m)} \sum_{k=0}^{m-1} \sum_{q=0}^{k} \sum_{n=0}^{2m+q-k} \frac{\gammafcn(2(2m-1-k)+q-n+1)}{\gammafcn(m-k) k! (2m+q-k-n)!} g(k,q;N,1-m) \\
&\phantom{===\pm}\times  \frac{(-1)^{n} f_{q+1}^{(n)}(1)}{n!} 2^{n-q-1-2(2m-1-k)}.
\end{align*}
Substituting the expressions for $g(k,q;N,1-m)$ (Theorem~\ref{thm:2}) and $f_{q+1}^{(n)}(1)$ (Cor.~\ref{cor:aux.8}), we get
\begin{equation}
\begin{split} \label{eq:formula.01}
&\mathcal{M}_{2m}(N) 
= - \frac{N^2}{\gammafcn(m)} \sum_{k=0}^{m-1} \frac{\gammafcn(2(m+k))}{\gammafcn(m-k) k! (m+k)!} \Pochhsymb{1-m}{m-1-k} 2^{-2(m+k)} \\
&\phantom{=}+ N \sum_{k=0}^{m-1} \sum_{q=0}^{k} \sum_{p=0}^{k-q} X(k,q,p;m) N^{p} + N \sum_{k=0}^{m-1} \sum_{q=0}^{k} \sum_{n=0}^{2m+q-k} \sum_{p=0}^{k-q} \sum_{\ell=0}^{n} X(k,q,p,n,\ell;m) \, N^{\ell+p},
\end{split}
\end{equation}
where the primary coefficients are given by
\begin{align*}
X(k,q,p;m) &\DEF \frac{2}{\gammafcn(m)} \frac{\gammafcn(2(2m-1-k)+q+1)}{\gammafcn(m-k) k! (2m+q-k)!} \frac{(-1)^{p} s(k,p+q;1-m) b(p+q,p)}{2^{q+1+2(2m-1-k)}} \\
X(k,q,p,n,\ell;m) &\DEF \frac{2}{\gammafcn(m)} \frac{\gammafcn(2(2m-1-k)+q-n+1)}{\gammafcn(m-k) k! (2m+q-k-n)!} \frac{(-1)^{p+n+\ell} s(k,p+q;1-m) b(p+q,p)}{2^{q+1+2(2m-1-k)-n}} \\
&\phantom{=\pm}\times G(n,\ell,q+1)
\end{align*}
which depend on $G(n,\ell,q)$ given in Corollary~\ref{cor:aux.8} (see also Lemma~\ref{lem:aux.7}) and the secondary coefficients
\begin{align*}
\Stirlings(n, \ell; y) &= \sum_{k=\ell}^n \binom{k}{\ell} \Stirlings(n,k) \left( y+n-1 \right)^{k-\ell}, \qquad b(p+q,p) = \sum_{j=0}^q \Eulerian{p+q}{j} \binom{p+q-j}{p}.
\end{align*}

We reorder the terms in \eqref{eq:formula.01} w.r.t. powers of $N$. The first  part is
\begin{equation*}
\sum_{k=0}^{m-1} \sum_{q=0}^{k} \sum_{p=0}^{k-q} X(k,q,p;m) N^{p} = \sum_{k=0}^{m-1} \sum_{p=0}^{k} \sum_{q=0}^{k-p} X(k,q,p;m) N^{p} = \sum_{p=0}^{m-1} \sum_{k=p}^{m-1} \sum_{q=0}^{k-p} X(k,q,p;m) N^{p}.
\end{equation*}
Letting $[ \cdots ] = X(k,k-q,p,n,\ell;m) \, N^{\ell+p}$, the second part is
\begin{align*}
&\sum_{k=0}^{m-1} \sum_{q=0}^{k} \sum_{n=0}^{2m+q-k} \sum_{p=0}^{k-q} \sum_{\ell=0}^{n} X(k,q,p,n,\ell;m) \, N^{\ell+p} = \sum_{k=0}^{m-1} \sum_{q=0}^{k} \sum_{p=0}^{q} \sum_{n=0}^{2m-q} \sum_{\ell=0}^{n} \Big[ \cdots \Big] \\
&\phantom{=}= \sum_{k=0}^{m-1} \sum_{q=0}^{k} \sum_{p=0}^{q} \sum_{\ell=0}^{2m-q} \sum_{n=\ell}^{2m-q}  \Big[ \cdots \Big] = \sum_{k=0}^{m-1} \sum_{p=0}^{k} \sum_{q=p}^{k} \sum_{\ell=0}^{2m-q} \sum_{n=\ell}^{2m-q}  \Big[ \cdots \Big] = \sum_{p=0}^{m-1} \sum_{k=p}^{m-1} \sum_{q=p}^{k} \sum_{\ell=0}^{2m-q} \sum_{n=\ell}^{2m-q}  \Big[ \cdots \Big] \\
&\phantom{=}= \sum_{p=0}^{m-1} \sum_{q=p}^{m-1} \sum_{\ell=0}^{2m-q} \sum_{k=q}^{m-1} \sum_{n=\ell}^{2m-q}  \Big[ \cdots \Big] = \left[ \sum_{p=0}^{m-1} \sum_{\ell=0}^{m} \sum_{q=p}^{m-1} \sum_{k=q}^{m-1} \sum_{n=\ell}^{2m-q} + \sum_{p=0}^{m-1} \sum_{\ell=m+1}^{2m-p} \sum_{q=p}^{2m-\ell} \sum_{k=q}^{m-1} \sum_{n=\ell}^{2m-q} \right]  \Big[ \cdots \Big] \\
&\phantom{=}= \left[ \sum_{\nu=0}^{m-1} \sum_{p=0}^{\nu} \sum_{q=p}^{m-1} \sum_{k=q}^{m-1} \sum_{n=\nu-p}^{2m-q} + \sum_{\nu=m}^{2m-1} \sum_{p=\nu-m}^{m-1} \sum_{q=p}^{m-1} \sum_{k=q}^{m-1} \sum_{n=\nu-p}^{2m-q} \right] X(k,k-q,p,n,\nu-p;m) \, N^{\nu} \\
&\phantom{==}+ \sum_{\nu=m+1}^{2m} \sum_{p=0}^{\nu-m-1} \sum_{q=p}^{2m-\nu+p} \sum_{k=q}^{m-1} \sum_{n=\nu-p}^{2m-q} X(k,k-q,p,n,\nu-p;m) \, N^{\nu}.
\end{align*}

Putting everything together, we arrive at 
\begin{align*}
\mathcal{M}_{2m}(N) 
&= - \frac{N^2}{\gammafcn(m)} \sum_{k=0}^{m-1} \frac{\gammafcn(2(m+k))  \Pochhsymb{1-m}{m-1-k}}{\gammafcn(m-k) k! (m+k)!} 2^{-2(m+k)} + N \sum_{p=0}^{m-1} \sum_{k=p}^{m-1} \sum_{q=0}^{k-p} X(k,q,p;m) N^{p} \\
&\phantom{=}+ N \left[ \sum_{\nu=0}^{m-1} \sum_{p=0}^{\nu} \sum_{q=p}^{m-1} \sum_{k=q}^{m-1} \sum_{n=\nu-p}^{2m-q} + \sum_{\nu=m}^{2m-1} \sum_{p=\nu-m}^{m-1} \sum_{q=p}^{m-1} \sum_{k=q}^{m-1} \sum_{n=\nu-p}^{2m-q} \right] X(k,k-q,p,n,\nu-p;m) \, N^{\nu} \\
&\phantom{=}+ N \sum_{\nu=m+1}^{2m} \sum_{p=0}^{\nu-m-1} \sum_{q=p}^{2m-\nu+p} \sum_{k=q}^{m-1} \sum_{n=\nu-p}^{2m-q} X(k,k-q,p,n,\nu-p;m) \, N^{\nu}.
\end{align*}
Direct computations (with the help of Mathematica) give the expected results (cf. \eqref{exp2xct})
\begin{equation*}
\mathcal{M}_{2}(N) = \frac{N^3}{12} - \frac{N}{12}, \qquad \mathcal{M}_{4}(N) = \frac{N^5}{720} + \frac{10}{720} N^3 - \frac{11}{720} N.
\end{equation*}
In general, $\mathcal{M}_{2m}(N) = \sum_{\nu = 0}^{2m} \beta_\nu(m) \, N^{1+\nu}$, $m \geq 2$,
where for $\nu = 0, 1, \dots, m - 1$ and $\nu \neq 1$,
\begin{subequations} \label{eq:beta.coeff.s}
\begin{align}
\beta_\nu(m) &\DEF \sum_{k=\nu}^{m-1} \sum_{q=0}^{k-\nu} X(k,q,\nu;m) + \sum_{p=0}^{\nu} \sum_{q=p}^{m-1} \sum_{k=q}^{m-1} \sum_{n=\nu-p}^{2m-q} X(k,k-q,p,n,\nu-p;m); 
\intertext{for $\nu = m$ and $m \neq 1$,}
\beta_m(m) &\DEF \sum_{p=0}^{m-1} \sum_{q=p}^{m-1} \sum_{k=q}^{m-1} \sum_{n=m-p}^{2m-q} X(k,k-q,p,n,m-p;m); 
\intertext{for $\nu = m + 1, \dots, 2m - 1$ (and $m \geq 2$),}
\begin{split}
\beta_\nu(m) &\DEF \left[ \sum_{p=\nu-m}^{m-1} \sum_{q=p}^{m-1} \sum_{k=q}^{m-1} \sum_{n=\nu-p}^{2m-q} + \sum_{p=0}^{\nu-m-1} \sum_{q=p}^{2m-\nu+p} \sum_{k=q}^{m-1} \sum_{n=\nu-p}^{2m-q} \right] X(k,k-q,p,n,\nu-p;m);
\end{split}
\intertext{for $\nu = 2m$ (and $m \geq 2$),}
\beta_{2m}(m) &\DEF \sum_{p=0}^{m-1} \sum_{k=p}^{m-1} X(k,k-p,p,2m-p,2m-p;m); \label{eq:beta.coeff.s-A}
\intertext{and for $\nu = 1$ (and $m \geq 2$),}
\begin{split}
\beta_1(m) &= - \frac{1}{\gammafcn(m)} \sum_{k=0}^{m-1} \frac{\gammafcn(2(m+k)) \Pochhsymb{1-m}{m-1-k}}{\gammafcn(m-k) k! (m+k)!} 2^{-2(m+k)} \\
&\phantom{=}+ \sum_{k=1}^{m-1} \sum_{q=0}^{k-1} X(k,q,1;m) + \sum_{p=0}^{1} \sum_{q=p}^{m-1} \sum_{k=q}^{m-1} \sum_{n=1-p}^{2m-q} X(k,k-q,p,n,1-p;m).
\end{split}
\end{align}
\end{subequations}
The result follows.
\end{proof}


\begin{proof}[Proof of Corollary~\ref{cor:4}]
By definition of $X(k,q,p,n,\ell;m)$ and \eqref{eq:beta.coeff.s-A}, we get 
\begin{align*}
\beta_{2m}(m) 
&= \sum_{p=0}^{m-1} \sum_{k=p}^{m-1} (-1)^p \frac{2^{2+k-2m} \gammafcn(2m-1-k)}{\gammafcn(m) k! \gammafcn(m-k)} \Stirlings(k,k; 1 - m) b(k,p) G(2m-p,2m-p,k-p+1).
\end{align*}
By the definitions of $\Stirlings(n,k;x)$ (Lemma~\ref{lem:aux.2}) and $b(n,\ell)$ (Lemma~\ref{lem:aux.3}), we have
\begin{equation*}
\Stirlings(k,k; 1 - m) = \Stirlings(k,k) = 1, \qquad b(k,p) = \sum_{j=0}^{k-p} \Eulerian{k}{j} \binom{k-j}{p}
\end{equation*}
and the $G(n,n,q)$ are defined by means of~\eqref{eq:generating.fcn.relation}. 
The result follows after rearranging terms. 
\end{proof}


{\bf Acknowledgement:} The research was supported, in part, by an APART-Fellowship of the Austrian Academy of Sciences. The author is grateful for the hospitality of School of Mathematics and Statistics at University of New South Wales where part of this research was conducted.

\appendix

\section{Auxiliary results}
\label{sec:appendix}

\begin{lem} \label{lem:aux.2}
Let $n$ be a nonnegative integer. Then
\begin{equation*}
\Pochhsymb{x+y}{n} = \sum_{\ell=0}^n \Stirlings(n, \ell; y) \, x^\ell, \qquad \Stirlings(n, \ell; y) \DEF \sum_{k=\ell}^n \binom{k}{\ell} \Stirlings(n,k) \left( y+n-1 \right)^{k-\ell} = \sum_{k=\ell}^n \binom{n}{k} \usgnStirlingS{k}{\ell} \Pochhsymb{y}{n-k},
\end{equation*}
where $\usgnStirlingS{n}{k}$ is the unsigned Stirling number of the first kind.
\end{lem}

\begin{proof}
Using \eqref{eq:Stirling.s} and the Binomial theorem, one gets
\begin{align*}
\Pochhsymb{x+y}{n} 
&= \sum_{k=0}^n \Stirlings(n,k) \left( x+y+n-1 \right)^k = \sum_{k=0}^n \sum_{\ell=0}^k \Stirlings(n,k) \binom{k}{\ell} x^\ell \left( y+n-1 \right)^{k-\ell}.
\end{align*}
The first formula for $\Stirlings(n, \ell; y)$ follows after reordering the sum. The second one follows from 
\begin{equation*}
\Pochhsymb{a+b}{n} = \sum_{j=0}^\infty \binom{n}{j} \Pochhsymb{a}{n-j} \Pochhsymb{b}{j}, \qquad \Pochhsymb{x}{n} = \sum_{k=0}^n \usgnStirlingS{n}{k} x^k.
\end{equation*}
\end{proof}

Note that 
\begin{equation} \label{eq:s.n.0.y}
\Stirlings(n, 0; y) = \sum_{k=0}^n \Stirlings(n,k) \left( y+n-1 \right)^{k} = \Pochhsymb{y+n-1+1-n}{n} = \Pochhsymb{y}{n}.
\end{equation}
%

We need the {\em polylogarithm function} defined by
\begin{equation} \label{eq:polylogarithm}
\PolyLog_\nu(z) = \sum_{k=1}^\infty \frac{z^k}{k^\nu}, \qquad | z | < 1.
\end{equation}
In particular, one has for $\nu=-n$ and $n$ a positive integer the relations (cf. \cite{Mi1983}, \cite{MathWorldPolylogarithm2009})
\begin{equation} \label{eq:polylogarithm.identities}
\PolyLog_{-n}(z) = \sum_{k=1}^\infty k^n z^k = \frac{1}{\left( 1 - z \right)^{n+1}} \sum_{j=0}^{n-1} \Eulerian{n}{j} z^{n-j}, \qquad |z|<1.
\end{equation}

\begin{lem} \label{lem:aux.3}
Let $n$ be a positive integer. Then
\begin{equation*}
\PolyLog_{-n}(z) = \frac{n! + \sum_{\ell=1}^n (-1)^\ell b(n,\ell) \left( 1 - z \right)^\ell}{\left( 1 - z \right)^{n+1}}, \qquad b(n,\ell) \DEF \sum_{j=0}^{n-\ell} \Eulerian{n}{j} \binom{n-j}{\ell}.
\end{equation*}
\end{lem}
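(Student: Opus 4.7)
The plan is to start from the closed form \eqref{eq:polylogarithm.identities} already recorded,
\begin{equation*}
\PolyLog_{-n}(z) = \frac{1}{(1-z)^{n+1}} \sum_{j=0}^{n-1} \Eulerian{n}{j} \, z^{n-j},
\end{equation*}
and expand the numerator as a polynomial in $1-z$ rather than in $z$. Once the numerator is rewritten in that form, the right-hand side of the lemma will appear directly, so the proof reduces to a bookkeeping identity for the Eulerian numbers.

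First I would write $z = 1 - (1-z)$ and apply the binomial theorem to each power,
\begin{equation*}
z^{n-j} = \sum_{\ell=0}^{n-j} (-1)^\ell \binom{n-j}{\ell} (1-z)^\ell .
\end{equation*}
Substituting this into the numerator and interchanging the two finite sums gives
\begin{equation*}
\sum_{j=0}^{n-1} \Eulerian{n}{j} \, z^{n-j}
= \sum_{\ell=0}^{n} (-1)^\ell (1-z)^\ell \sum_{\substack{j \geq 0 \\ j \leq \min(n-1,\, n-\ell)}} \Eulerian{n}{j} \binom{n-j}{\ell}.
\end{equation*}
For $1 \leq \ell \leq n$ the constraint $j \leq n-\ell$ is the binding one (and is automatically $\leq n-1$), so the inner sum is exactly $b(n,\ell)$ as defined in the lemma.

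The one step that needs a separate remark is the $\ell = 0$ term: there the binomial $\binom{n-j}{0}$ is $1$, and $j$ ranges over $0, 1, \dots, n-1$, so the inner sum collapses to
\begin{equation*}
\sum_{j=0}^{n-1} \Eulerian{n}{j} = n!,
\end{equation*}
which is the classical row-sum identity for Eulerian numbers (and can be read off, for instance, from \eqref{eq:Eulerian} by setting $x = 1$, since $\binom{1+k}{n} = 0$ for $k < n-1$ and $\binom{n}{n} = 1$, combined with a standard symmetry). Putting the pieces together yields
\begin{equation*}
\sum_{j=0}^{n-1} \Eulerian{n}{j} \, z^{n-j} = n! + \sum_{\ell=1}^{n} (-1)^\ell b(n,\ell) (1-z)^\ell,
\end{equation*}
and dividing by $(1-z)^{n+1}$ gives the stated formula for $\PolyLog_{-n}(z)$. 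There is no real obstacle here; the only mild subtlety is to separate out the $\ell = 0$ contribution, since its range of $j$ differs from that of the remaining terms and must be evaluated using the Eulerian row-sum identity.
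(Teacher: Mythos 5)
Your proof is correct and follows essentially the same route as the paper's: expand $z^{n-j}=[1-(1-z)]^{n-j}$ by the binomial theorem in the closed form \eqref{eq:polylogarithm.identities}, interchange the two finite sums, and split off the $\ell=0$ term, which is the Eulerian row sum $n!$ (the paper simply cites \cite[Eq.~26.14.10]{DLMF2010.05.10} for this). The only blemish is your parenthetical justification of that row sum: setting $x=1$ in \eqref{eq:Eulerian} yields only $\Eulerian{n}{n-1}=1$, not $\sum_{j}\Eulerian{n}{j}=n!$, but since the row-sum identity itself is standard this does not affect the argument.
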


\begin{proof}
By \eqref{eq:polylogarithm.identities} and the Binomial theorem
\begin{equation*}
\PolyLog_{-n}(z) = \frac{1}{\left( 1 - z \right)^{n+1}} \sum_{j=0}^{n-1} \Eulerian{n}{j} \left[1 - \left( 1 - z \right) \right]^{n-j} = \frac{1}{\left( 1 - z \right)^{n+1}} \sum_{j=0}^{n-1} \sum_{\ell=0}^{n-j} \Eulerian{n}{j} \binom{n-j}{\ell} (-1)^\ell \left( 1 - z \right)^\ell.
\end{equation*}
Reordering of the sum yields
\begin{equation*}
\PolyLog_{-n}(z) = \frac{1}{\left( 1 - z \right)^{n+1}} \left\{ \sum_{j=0}^{n-1} \Eulerian{n}{j} + \sum_{\ell=1}^{n} \left[ \sum_{j=0}^{n-\ell} \Eulerian{n}{j} \binom{n-j}{\ell} \right] (-1)^\ell \left( 1 - z \right)^\ell \right\}.
\end{equation*}
The sum over the Eulerian numbers gives $n!$ (cf. \cite[Eq.~26.14.10]{Olver:2010:NHMF}). The result follows.
\end{proof}

Note that (using $\Eulerian{q}{q} = 0$ for $q \geq 1$)
\begin{equation} \label{eq:b.n.0}
b(n,0) = \sum_{j=0}^n \Eulerian{n}{j} = \sum_{j=0}^{n-1} \Eulerian{n}{j} = n!.
\end{equation}

\begin{lem} \label{lem:aux.4}
Let $k$ be a non-negative integer and $a,b$ complex numbers. Then 
\begin{equation*}
\sum_{\nu=-\infty}^\infty \Pochhsymb{| \nu | a + b}{k} z^{| \nu |} = - \Pochhsymb{b}{k} + 2 \sum_{q=0}^{k} g(k,q;a,b) \left( 1 - z \right)^{-q-1}, \qquad | z | < 1,
\end{equation*}
where ($s(n,\ell;y)$ is defined in Lemma~\ref{lem:aux.2} and $b(n,\ell)$ is defined in Lemma~\ref{lem:aux.3})
\begin{equation*}
g(k,q;a,b) \DEF \sum_{p=0}^{k-q} (-1)^{p} s(k,p+q;b) b(p+q,p) a^{p+q}. 
\end{equation*}
\end{lem}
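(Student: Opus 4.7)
The plan is first to reduce to a one-sided sum. By splitting the $\nu$-sum at $\nu=0$ and using $|\nu|=|-\nu|$,
\[
\sum_{\nu=-\infty}^{\infty} \Pochhsymb{|\nu|a+b}{k} z^{|\nu|} = \Pochhsymb{b}{k} + 2\sum_{\nu=1}^{\infty} \Pochhsymb{\nu a + b}{k} z^{\nu} = -\Pochhsymb{b}{k} + 2\sum_{\nu=0}^{\infty} \Pochhsymb{\nu a + b}{k} z^\nu,
\]
so it suffices to establish the identity
\[
S(z) \DEF \sum_{\nu=0}^\infty \Pochhsymb{\nu a + b}{k} z^\nu = \sum_{q=0}^k g(k,q;a,b)(1-z)^{-q-1}, \qquad |z|<1.
\]

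Next, I apply Lemma~\ref{lem:aux.2} with $x = \nu a$ and $y = b$ to write
\[
\Pochhsymb{\nu a + b}{k} = \sum_{\ell=0}^k \Stirlings(k,\ell;b)\, a^\ell\, \nu^\ell.
\]
Interchanging the finite $\ell$-sum with the convergent geometric-type series (legitimate for $|z|<1$),
\[
S(z) = \sum_{\ell=0}^k \Stirlings(k,\ell;b)\, a^\ell \sum_{\nu=0}^\infty \nu^\ell z^\nu.
\]
The inner series is $\sum_{\nu=0}^\infty z^\nu = (1-z)^{-1}$ when $\ell=0$, and equals $\PolyLog_{-\ell}(z)$ when $\ell\geq 1$ (since $0^\ell = 0$ removes the boundary term). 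For the latter, Lemma~\ref{lem:aux.3} together with $b(\ell,0)=\ell!$ from \eqref{eq:b.n.0} yields the clean partial-fraction form
\[
\PolyLog_{-\ell}(z) = \sum_{p=0}^\ell (-1)^p b(\ell,p)(1-z)^{p-\ell-1}.
\]
Moreover, the $\ell=0$ case fits this template with the convention $b(0,0)=1$, so both cases can be treated uniformly.

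The final step is reordering the resulting double sum according to powers of $(1-z)$. For each $\ell \in \{0,\dots,k\}$ and $p \in \{0,\dots,\ell\}$ the exponent is $-(q+1)$ with $q=\ell-p$; collecting terms with a fixed $q$,
\[
S(z) = \sum_{q=0}^k (1-z)^{-q-1} \sum_{\ell=q}^{k} (-1)^{\ell-q}\, \Stirlings(k,\ell;b)\, a^\ell\, b(\ell,\ell-q).
\]
Relabeling $p = \ell - q$ in the inner sum reproduces precisely the definition of $g(k,q;a,b)$ stated in the lemma, and the proof is complete. The main obstacle is the careful bookkeeping of the $\ell = 0$ case: Lemma~\ref{lem:aux.3} is stated only for $n \geq 1$, so one must verify by hand that the elementary identity $\sum_{\nu \geq 0} z^\nu = (1-z)^{-1}$, paired with $\Stirlings(k,0;b)=\Pochhsymb{b}{k}$ from \eqref{eq:s.n.0.y}, slots into the $p = 0$ component of the $q = 0$ coefficient of the right-hand side; everything else is routine index manipulation.
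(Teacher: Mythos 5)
Your argument is correct and follows essentially the same route as the paper's proof: expand $\Pochhsymb{\nu a+b}{k}$ via Lemma~\ref{lem:aux.2}, evaluate the resulting series through $\PolyLog_{-\ell}(z)$ with Lemma~\ref{lem:aux.3}, and collect powers of $(1-z)$. The only cosmetic differences are that you absorb the $\nu=0$ term at the outset (so $k=0$ and the $\ell=0$ contribution are handled uniformly via $b(0,0)=1$ and $\Stirlings(k,0;b)=\Pochhsymb{b}{k}$), whereas the paper treats $k=0$ and the $\PolyLog_{0}$ term separately and merges them at the end.
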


\begin{proof}
Let $f(z)$ denote the inifinte series above. For $k = 0$, one gets
\begin{equation*}
f(z) = \sum_{\nu=-\infty}^\infty z^{| \nu |} = 1 + 2 \sum_{\nu=1}^\infty z^{\nu} = 1 + \frac{2z}{1-z} = \frac{1+z}{1-z}.
\end{equation*}
Let $k \geq 1$. By Lemma~\ref{lem:aux.2}, one has
\begin{equation*}
f(z) = \Pochhsymb{b}{k} + 2 \sum_{\nu=1}^\infty \Pochhsymb{\nu a + b}{k} z^\nu = \Pochhsymb{b}{k} + 2 \sum_{p=0}^k \Stirlings(k, p; b) a^p \sum_{\nu=1}^\infty \nu^p z^\nu =\Pochhsymb{b}{k} + 2 \sum_{p=0}^k \Stirlings(k, p; b) a^p \PolyLog_{-p}(z).
\end{equation*}
By Lemma~\ref{lem:aux.3} and \eqref{eq:Stirling.s} (and using Equations~\eqref{eq:s.n.0.y} and \eqref{eq:b.n.0}), one gets 
\begin{align*}
f(z) 
&= \Pochhsymb{b}{k} + 2 \Stirlings(k, 0; b) \PolyLog_{0}(z) + 2 \sum_{p=1}^k \frac{\Stirlings(k, p; b) a^p p! + \sum_{\ell=1}^p (-1)^\ell b(p,\ell) \left( 1 - z \right)^\ell}{\left( 1 - z \right)^{p+1}} \\
&= \Pochhsymb{b}{k} + 2 \Pochhsymb{b}{k} \frac{z}{1-z} + 2 \sum_{p=1}^k \frac{\Stirlings(k, p; b) p!  a^p}{\left( 1 - z \right)^{p+1}} + 2 \sum_{p=1}^k \sum_{\ell=1}^p \frac{\Stirlings(k, p; b) a^p (-1)^\ell b(p,\ell)}{\left( 1 - z \right)^{p + 1 - \ell}}.
\end{align*}
Simplification of the first three terms and summing up over equal powers of $(1-z)$ while using $\Eulerian{n}{n} = 0$ for $n\geq1$ and \eqref{eq:b.n.0} yields the desired result. 
\end{proof}

\bibliographystyle{abbrv}
\bibliography{bibliography}

\begin{thebibliography}{10}

\bibitem{AbSt1992}
M.~Abramowitz and I.~A. Stegun, editors.
\newblock {\em Handbook of mathematical functions with formulas, graphs, and
  mathematical tables}.
\newblock Dover Publications Inc., New York, 1992.
\newblock Reprint of the 1972 edition.

\bibitem{Be2014arXiv}
L.~B{\'e}termin.
\newblock Renormalized energy and asymptotic expansion of optimal logarithmic
  energy on the sphere.
\newblock manuscript, \texttt{arXiv:1404.4485v2 [math.AP]}, 2014.

\bibitem{Bj1956}
G.~Bj{\"o}rck.
\newblock Distributions of positive mass, which maximize a certain generalized
  energy integral.
\newblock {\em Ark. Mat.}, 3:255--269, 1956.

\bibitem{BoHaSaBook}
S.~V. Borodachov, D.~P. Hardin, and E.~B. Saff.
\newblock Minimal {D}iscrete {E}nergy on the {S}phere and {O}ther {M}anifolds.
\newblock {S}pringer Verlag, to appear, 2015.

\bibitem{Br2014manuscript}
J.~S. Brauchart.
\newblock Asymptotics of the logarithmic and potential energy of (constrained)
  elliptic fekete and menke points on intervals.
\newblock manuscript, 2014.

\bibitem{BrGr2014arXiv}
J.~S. Brauchart and P.~J. Grabner.
\newblock Distributing many points on spheres: minimal energy and designs.
\newblock manuscript, \texttt{ arXiv:1407.8282 [math-ph]}, 2014.

\bibitem{BrHaSa2009}
J.~S. Brauchart, D.~P. Hardin, and E.~B. Saff.
\newblock The {R}iesz energy of the {$N$}th roots of unity: an asymptotic
  expansion for large {$N$}.
\newblock {\em Bull. Lond. Math. Soc.}, 41(4):621--633, 2009.

\bibitem{BrHaSa2012b}
J.~S. Brauchart, D.~P. Hardin, and E.~B. Saff.
\newblock Discrete energy asymptotics on a {R}iemannian circle.
\newblock {\em Unif. Distrib. Theory}, 7(2):77--108, 2012.

\bibitem{BrHaSa2012}
J.~S. Brauchart, D.~P. Hardin, and E.~B. Saff.
\newblock The next-order term for optimal {R}iesz and logarithmic energy
  asymptotics on the sphere.
\newblock In {\em Recent advances in orthogonal polynomials, special functions,
  and their applications}, volume 578 of {\em Contemp. Math.}, pages 31--61.
  Amer. Math. Soc., Providence, RI, 2012.

\bibitem{Chu2003}
W.-C. Chu.
\newblock Summations on trigonometric functions.
\newblock {\em Appl. Math. Comput.}, 141(1):161--176, 2003.
\newblock Advanced special functions and related topics in differential
  equations (Melfi, 2001).

\bibitem{CoKu2007}
H.~Cohn and A.~Kumar.
\newblock Universally optimal distribution of points on spheres.
\newblock {\em J. Amer. Math. Soc.}, 20(1):99--148 (electronic), 2007.

\bibitem{NIST:DLMF}
{NIST Digital Library of Mathematical Functions}.
\newblock http://dlmf.nist.gov/, Release 1.0.9 of 2014-08-29.
\newblock Online companion to \cite{Olver:2010:NHMF}.

\bibitem{Gr2014}
P.~J. Grabner.
\newblock Point sets of minimal energy.
\newblock In G.~Larcher, F.~Pillichshammer, A.~Winterhof, and C.~Xing, editors,
  {\em Applications of Algebra and Number Theory}, pages 104--117. Cambridge
  University Press, 2014.
\newblock Essays in Honour of Harald Niederreiter.

\bibitem{HaSa2004}
D.~P. Hardin and E.~B. Saff.
\newblock Discretizing manifolds via minimum energy points.
\newblock {\em Notices Amer. Math. Soc.}, 51(10):1186--1194, 2004.

\bibitem{Mi2010}
M.~Mihoubi.
\newblock Partial bell polynomials and inverse relations.
\newblock {\em J. Integer Seq.}, 13(4):Article 10.4.5, 8, 2010.

\bibitem{Mi1983}
J.~Milnor.
\newblock On polylogarithms, {H}urwitz zeta functions, and the {K}ubert
  identities.
\newblock {\em Enseign. Math. (2)}, 29(3-4):281--322, 1983.

\bibitem{Olver:2010:NHMF}
F.~W.~J. Olver, D.~W. Lozier, R.~F. Boisvert, and C.~W. Clark, editors.
\newblock {\em {NIST Handbook of Mathematical Functions}}.
\newblock Cambridge University Press, New York, NY, 2010.
\newblock Print companion to \cite{NIST:DLMF}.

\bibitem{PrBrMa1986I}
A.~P. Prudnikov, Y.~A. Brychkov, and O.~I. Marichev.
\newblock {\em Integrals and series. {V}ol. 1}.
\newblock Gordon \& Breach Science Publishers, New York, 1986.
\newblock Elementary functions, Translated from the Russian and with a preface
  by N. M. Queen.

\bibitem{SaKu1997}
E.~B. Saff and A.~B.~J. Kuijlaars.
\newblock Distributing many points on a sphere.
\newblock {\em Math. Intelligencer}, 19(1):5--11, 1997.

\bibitem{SaTo1997}
E.~B. Saff and V.~Totik.
\newblock {\em Logarithmic potentials with external fields}, volume 316 of {\em
  Grundlehren der Mathematischen Wissenschaften [Fundamental Principles of
  Mathematical Sciences]}.
\newblock Springer-Verlag, Berlin, 1997.
\newblock Appendix B by Thomas Bloom.

\bibitem{SaSe2012}
E.~Sandier and S.~Serfaty.
\newblock From the {G}inzburg-{L}andau model to vortex lattice problems.
\newblock {\em Comm. Math. Phys.}, 313(3):635--743, 2012.

\bibitem{Se2014}
S.~Serfaty.
\newblock Ginzburg-{L}andau {V}ortices, {C}oulomb {G}ases, and {R}enormalized
  {E}nergies.
\newblock {\em J. Stat. Phys.}, 154(3):660--680, 2014.

\bibitem{MathWorldPolylogarithm2009}
E.~W. Weisstein.
\newblock {Dirichlet L-Series}.
\newblock From MathWorld--A Wolfram Web Resource.
  http://mathworld.wolfram.com/Polylogarithm.html, Nov. 29 2010.

\end{thebibliography}

\end{document}